\journal{Applied Intelligence}  % Information Sciences
\definecolor{dkgreen}{rgb}{0,0.6,0}
\definecolor{gray}{rgb}{0.5,0.5,0.5}
\definecolor{mauve}{rgb}{0.58,0,0.82}
\tiny\color{gray},
\def\lst@makecaption{%
  \def\@captype{table}%
  \@makecaption
}
\newtheorem{definition}{Definition}
\newtheorem{proof}{Proof}    % gws
\newtheorem{theorem}{Theorem}   % gws
\begin{document}
\begin{frontmatter}
	
	\title{FRI-Miner: Fuzzy Rare Itemset Mining}
	
	%% Group authors per affiliation:
	\author{Yanling Cui$ ^{1} $, Wensheng Gan$ ^{2,3}$*,  Hong Lin$ ^{4} $,  and  Weimin Zheng$ ^{1} $*}
	
	\address{$ ^{1} $College of Computer Science and Engineering, Shandong University of Science and Technology, Qingdao 266590, Shandong, China} 
	\address{$ ^{2} $College of Cyber Security, Jinan University, Guangzhou 510632, Guangdong, China}
	\address{$ ^{3} $Guangdong Artificial Intelligence and Digital Economy Laboratory (Guangzhou), Guangdong, China}

	\address{$ ^{4} $Department of Computer Sciences, Guangdong University of Technology, Guangdong 510006, China}

	\address{Email: ylcui001@gmail.com, wsgan001@gmail.com, lhed9eh0g@gmail.com, zhengweimin@sdust.edu.cn} 
	
	\cortext[cor1]{Corresponding author. Email: wsgan001@gmail.com} % Tel.: +86-XXXX.}

\begin{abstract}

Data mining is a widely used technology for various real-life applications of data analytics and is important to discover valuable association rules in transaction databases. Interesting itemset mining plays an important role in many real-life applications, such as market, e-commerce, finance, and medical treatment. To date, various data mining algorithms based on frequent patterns have been widely studied, but there are a few algorithms that focus on mining infrequent or rare patterns. In some cases, infrequent or rare itemsets and rare association rules also play an important role in real-life applications. In this paper, we introduce a novel fuzzy-based rare itemset mining algorithm called FRI-Miner, which discovers valuable and interesting fuzzy rare itemsets in a quantitative database by applying fuzzy theory with linguistic meaning. Additionally, FRI-Miner utilizes the fuzzy-list structure to store important information and applies several pruning strategies to reduce the search space. The experimental results show that the proposed FRI-Miner algorithm can discover fewer and more interesting itemsets by considering the quantitative value in reality. Moreover, it significantly outperforms state-of-the-art algorithms in terms of effectiveness (w.r.t. different types of derived patterns) and efficiency (w.r.t. running time and memory usage).

\end{abstract}
	
\begin{keyword}
	Quantitative data, fuzzy-set theory, rare pattern, fuzzy data mining
\end{keyword}
	
\end{frontmatter}

% \linenumbers

\section{Introduction}

Association mining \cite{agrawal1994fast,gan2017data,han2011data} of a transaction database is performed to determine association rules between a set of itemsets, for example, a set of events containing \{\textit{diaper}, \textit{beer}\} and \{\textit{milk}, \textit{bread}\}, which are often analyzed in market basket analysis. Several useful and interesting phenomena can be explored based on the association rules. By discovering the connection between items/objects, we can create a suitable market plan, form a suitable marketing strategy \cite{berry2004data, linoff2011data, shaw2001knowledge}, and effectively apply the data to all types of analysis. Different data-mining methods are used depending on the requirements of various applications. The examples of such methods include frequent pattern mining (FPM) \cite{agrawal1994fast, han2011data,lin2015rwfim}, utility-driven pattern mining \cite{fournier2019mining,gan2019survey,gan2018survey, nguyen2019efficient}, sequential pattern mining \cite{kim2007squire,le2018mining, srikant1996mining}, and rare pattern mining (RPM) \cite{kiran2011novel,koh2016unsupervised,szathmary2007towards,tsang2011rp}. FPM is commonly adopted to extract association rules from a transaction database. In general, association rules are categorized as ``frequent'' or ``rare'' according to the specified minimum support threshold (\textit{minSup}).  ``Frequent'' refers to common or anticipated phenomena, while ``rare'' represents infrequent or previously unknown phenomena; thus, varied information can be extracted from the database.

In real life, it is possible to buy multiple copies of the same item in a transaction database, and mining association rules from such a quantitative database is an important task. The fuzzy set theory, which was first proposed by Zadeh in 1965 \cite{zadeh1965fuzzy}, is more suitable for dealing with quantitative values and expressing appropriate language values because it can help people better understand knowledge. Each element in the fuzzy set can be assigned a membership degree to indicate the degree to which it belongs, such as \{\textit{A.low}\}, \{\textit{B.mid}\}.  Chan and Au \cite{chan1997mining} first proposed an Apriori-like \cite{agrawal1994fast} algorithm, namely F-APACS, to discover fuzzy association rules. Kuok et al. \cite{kuok1998mining} proposed a method for processing quantitative data. Hong et al. \cite{hong1999data} proposed an FDTA algorithm to process a quantitative database. In contrast to these Apriori-like algorithms, several fuzzy pattern mining algorithms based on tree structures were proposed. Lin et al. \cite{lin2010linguistic} proposed a fuzzy frequent pattern tree that can effectively discover fuzzy frequent itemsets (FFIs). Lin et al. \cite{lin2010efficient} proposed a compressed FFP (CFFP)-tree algorithm. Although this CFFP-tree-based algorithm can reduce the number of tree nodes by using an additional array on each node, it is requires computational expenses to save the array. In addition, if the transactions are large, the spatial complexity of each node is high. Therefore, Lin et al. \cite{lin2014mining} proposed the UBFFP-tree algorithm to solve the problem of CFFP-tree overhead, which uses the same global sort strategy as in CFFP-tree to construct trees. Each term in the transaction is obfuscated by retaining only the language terms with the largest cardinality later in the process. Subsequently, Lin et al. \cite{lin2015fast} proposed the FFI-Miner algorithm to discover a complete set of FFIs without generating candidates. By adopting the fuzzy-list structure, the necessary information is preserved in the mining process. They also proposed an effective pruning strategy that can reduce the search space and further accelerate the mining process. Some effective algorithms for mining FFIs for association rules are still being studied.

Fuzzy-based data mining is simple and similar to human reasoning. According to the algorithms mentioned above, a fuzzy theory has been extensively developed \cite{zadeh1965fuzzy} for application to FPM. However, the rare pattern mining (RPM)  remains to be explored further. FPM, which is relatively mature, can find often-appearing or  expected phenomena. In contrast, RPM can usually discover unknown or unexpected phenomena. Based on the existing algorithms, it is known that RPM can discover phenomena that are important in real life. For example, we assume that \textit{A} and \textit{B} represent two symptoms, and \textit{C} represents a disease. In general, symptom \textit{A} may lead to disease \textit{C}. After rare pattern mining, we may find that symptom \textit{B} also leads to disease \textit{C}, which will be of great help to the medical industry. The performance of students with poor scores can also be determined, and their learning conditions can be adjusted accordingly. Apriori-like algorithms were initially used to determine frequent or rare association rules. If the minimum support (\textit{minSup}) threshold is set extremely small, explosive growth occurs. However, if it is set extremely large, some useful association rules may be ignored. Based on the above problems, we attempt to use two thresholds, the minimum rare support (\textit{minRSup}) and the minimum frequent support (\textit{minFSup}), to achieve a better effect. As mentioned before, a quantitative transaction database can be processed with linguistic meaning using fuzzy theory. In the FFI-Miner algorithm \cite{lin2015fast}, which aims to mine FFIs, while fuzzy rare but quite interesting itemsets are ignored. 

To the best of our knowledge, RPM based on fuzzy theory has not yet been studied.  To this end, a novel algorithm named FRI-Miner for mining fuzzy rare itemsets (FRIs) from a quantitative transaction database is proposed in this paper.  The major contributions of this study are as follows.

\begin{enumerate}
	
	\item  This study is the first to formulate the problem of fuzzy RPM with linguistic meaning. The proposed fuzzy-list-based FRI-Miner algorithm addresses this problem successfully. As a fuzzy-theoretic data-driven model, it is explainable and similar to human reasoning that is more useful for decision making.
	
	\item Fuzzy rare itemsets in FRI-Miner are categorized into three types: (1) containing only fuzzy rare items; (2) containing any combination of fuzzy rare items and fuzzy frequent items; and (3) containing only fuzzy frequent items. The first type is easy to understand; in the third type, fuzzy frequent itemsets that are themselves frequent may actually be fuzzy rare itemsets.
	
	\item Several pruning strategies that utilize the properties of fuzziness, rare pattern, and fuzzy support are designed to successfully reduce the search space of FRI-Miner.
	
	\item Experiments on several benchmark databases are conducted to show that the proposed FRI-Miner algorithm has better effectiveness and mining efficiency compared to those of existing methods.
	
\end{enumerate}

The remainder of this paper is organized as follows: In Section \ref{sec:relatedwork}, we review previous related work in the field of FPM and RPM. In Section \ref{sec:preliminaries}, we present the definition and basic concepts of rare itemsets and fuzzy theory and then formulate the problem of FRI mining. The details of the proposed FRI-Miner algorithm are presented in Section \ref{sec:algorithm}. In addition, we use an example to illustrate the details of FRI-Miner for ease of understanding. The experimental evaluation is provided in detail in Section \ref{sec:experiments}. Finally, the conclusions and future work are presented in Section \ref{sec:conclusion}.

\section{Literature Review}
\label{sec:relatedwork}

In this section, we briefly discuss fuzzy pattern mining and RPM. Then, we highlight the importance of fuzzy RPM and further discuss several applications.

\subsection{Fuzzy pattern mining}

According to the existing FPM algorithms, the data types can be roughly categorized as Boolean, ordered, or quantitative \cite{gan2017data,han2011data,lin2015rwfim}. Traditional FPM aims at processing data that do not contain quantitative values, although quantitative data are common in real life.  To solve the processing problem of a quantitative database, the quantitative value of a project can be converted into a language term with a fuzzy degree by using fuzzy set theory \cite{zadeh1965fuzzy}. Meaningful fuzzy association rules can be discovered in real life. The addition of fuzzy concepts fuzzifies the quantitative values. This is also convenient to better understand the true meaning of the data and make the data simpler and easier to understand. For example, if the database does not contain the quantity value corresponding to each item, then the frequency of the item is only related to the Boolean value. However, for databases containing quantitative values, the frequency of occurrence of each item is determined by the quantitative values of the items.

Frequent itemset mining based on fuzzy theory was further elaborated by the Apriori algorithm \cite{agrawal1994fast}. Fuzzy association rules are induced by the hierarchical intelligent mining of the FFI. First, according to the predefined membership function, the quantization value of the item is converted into a language term, which not only reflects the occurrence frequency of the item in the database, but also reflects the support of the itemset. Therefore, several algorithms for mining fuzzy itemsets that meet the minimum support threshold through deterministic factors have been designed. The F-APACS algorithm for mining fuzzy association rules was first proposed by Chan and Au \cite{chan1997mining}. Kuok et al. \cite{kuok1998mining} proposed an algorithm by processing quantitative attributes to discover fuzzy association rules, and Hong et al. \cite{hong1999mining,hong1999data} proposed the FDTA algorithm to process quantitative databases using fuzzy set theory. Subsequently, a fuzzy frequent pattern tree \cite{papadimitriou2005fuzzy} was proposed. Several tree-based algorithms for mining FFIs have been proposed \cite{lin2014mining, lin2010linguistic}, which are based on the FP-growth method \cite{han2004mining} for mining frequent itemsets. These algorithms first construct a tree structure of fuzzy frequent items through the reserved frequent itemsets and then further extract more fuzzy frequent patterns from the constructed tree structure. However, the mining process with a pattern-growth mechanism \cite{han2004mining} is often complicated and requires the storage of numerous tree nodes with additional information. To summarize, all these fuzzy-based FPM algorithms can only discover expected frequent phenomena.

\subsection{Rare pattern mining}

Based on the Apriori algorithm \cite{agrawal1994fast}, many frequency-based data mining algorithms have been extended, but the mining results mostly correspond to common or expected phenomena. To address this, another data mining framework, rare pattern mining (RPM) \cite{ji2012method,koh2016unsupervised,sadhasivam2011mining}, has been introduced, and many algorithms for RPM have been proposed in recent years. In contrast to frequent and common patterns, discovering rare patterns may be more useful in some cases (e.g., itemsets and association rules), which are important for real-life applications.

Most of static rare itemset mining algorithms can be roughly divided into support threshold, no support threshold, and constraints \cite{koh2016unsupervised}. Because we use the support threshold in this study, we provide a brief overview of RPM with the support threshold. Because the support threshold for rare itemsets is lower than that for frequent itemsets, the generation of rare itemsets is better realized by setting lower or distinct support thresholds. In 1999, Liu et al. \cite{liu1999mining} proposed the multiple support a priori algorithm that adopted multiple minimum support thresholds to successfully discover rare itemsets. The Apriori-reverse algorithm proposed by Koh and Rountree \cite{koh2005finding} aims to discover rare rules with respect to completely dispersed itemsets, which only contain items below the maximum support threshold. Later, a rare itemset mining algorithm (ARIMA) was proposed by Szathmary et al. \cite{szathmary2007towards}. Troiano et al. \cite{troiano2009fast} introduced the rarity algorithm, which is a top-down rare itemset mining algorithm. Up to now, a number of RPM algorithms have been extensively proposed, such as CFP-growth++ \cite{kiran2011novel}. Among them, several studies are designed to deal with dynamic data streams \cite{hemalatha2015minimal, huang2012rare, huang2014detecting}. There are numerous candidates for most algorithms based on Apriori mechanism. The RP-tree-based algorithm \cite{tsang2011rp} discovers rare patterns that meet the conditions between \textit{minRSup} and \textit{minFSup}.

\subsection{Applications of fuzzy rare pattern mining}

For some real applications, unusual rare patterns are more important and useful than frequent ones. In some application domains,  RPM is more suitable for intelligent systems. The result of network intrusion can be obtained by detecting whether the network is abnormal. In medicine, sudden changes are diagnosed by finding data that are different from those corresponding to normal health. In an insurance company, by finding rare people who need high-risk claims, making reasonable marketing strategies, and so on. In recent years, the field of RPM \cite{koh2016unsupervised} has been further developed. The emergence of the RPM has also made a significant contribution to the research community. However, these previous RPM algorithms rarely involve fuzzy theory. In contrast, many FPM algorithms based on fuzzy theory have been widely used in the mining of quantitative data, while RPM dealing with quantitative data is extremely rare. To the best of our knowledge, RPM dealing with Boolean or quantitative data based on fuzzy theory has not yet been studied. To this end, we propose an effective algorithm that uses fuzzy theory to discover the interesting rare patterns from a quantitative transaction database.

\section{Preliminaries and Problem Formulation}
\label{sec:preliminaries}

In this section, we introduce some basic concepts, principles of fuzzy-driven pattern mining and RPM. Some definitions in previous research are adopted here to present common concepts clearly. Further details regarding the background of the fuzzy FPM can be found in Ref. \cite{lin2010linguistic,lin2015fast}.

\subsection{Preliminaries}

\textit{I} = \{$ i_{1} $, $ i_{2} $, $\dots$, $ i_{m} $\} represents a finite set $I$ composed of $m$ different items. \textit{D} = \{$ t_{1} $, $ t_{2} $, $\dots$, $ t_{n} $\} represents a transactional database \textit{D} ($1 \leq q \leq n$) composed of \textit{n} different items, in which each transaction $ t_{q} $ is a subset of \textit{I}. Each transaction contains a unique identifier \textit{tid}, and each item lists the number of quantities as $ v_{iq} $. If an itemset consists of \textit{k} different items, we call it \textit{k}-itemset, and if each item in the \textit{k}-itemsets is included in the transaction $ t_{q} $, we call the itemset a subset of the transaction $ t_{q} $. The minimum support is defined as \textit{minRSup}, and the maximum support is defined as \textit{minFSup}. The specified member function is set to $\mu$, which can be adjusted according to the user's needs. In this study, all items in our database are represented using a single fuzzy membership function, as shown in Figure \ref{fig:3terms}. We set three language terms in the adopted membership functions: \textit{low}, \textit{middle}, and \textit{high}. The quantitative value of each item is obscured by the same member function as the corresponding language term.

\begin{figure}[!htbp]
	\centering
	\includegraphics[scale=0.45]{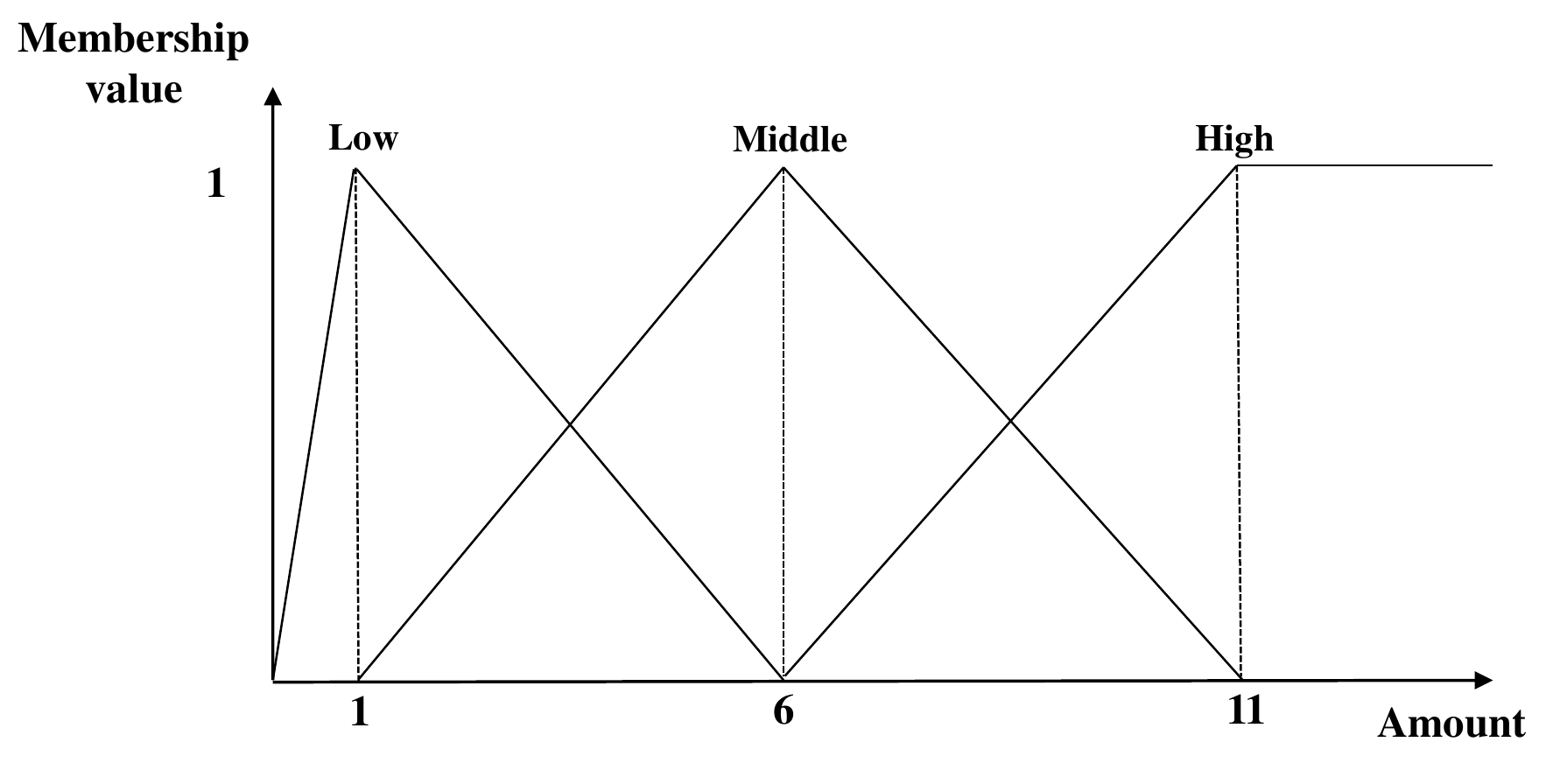}
	\captionsetup{justification=centering}
	\caption{The used linear membership functions of linguistic 3-terms.}
	\label{fig:3terms}
\end{figure}

In this paper, we present a running example using a quantitative transactional database, as shown in Table \ref{table:db}. There are six items, (\textit{A}), (\textit{B}), (\textit{C}), (\textit{D}), (\textit{E}), (\textit{F}), and eight transactions \{$ t_{1} $, $ t_{2} $, $\dots$, $ t_{8} $\}. We set the minimum support threshold to \textit{minRSup} (= 25\%) and the maximum support threshold to \textit{minFSup} (= 50\%). The idea is to filter out items that seem insensible and similar to noise; thus, this can help to reduce unnecessary search space.

\begin{table}[h]
	\begin{center}
		\caption{A quantitative database.}
		\label{table:db}
		\begin{tabular}{|c|c|}
			\hline
			\textbf{\textit{tid}} & \textbf{Transaction}  \\ \hline
			$t_{1}$	& $A$:3, $B$:5, $D$:10, $E$:9  \\ \hline
			$t_{2}$	& $B$:8, $D$:3  \\  \hline
			$t_{3}$	& $A$:3, $B$:8, $D$:9, $F$:5  \\  \hline
			$t_{4}$	& $B$:5, $C$:4, $D$:11, $E$:2 \\  \hline
			$t_{5}$	& $B$:7, $C$:3, $D$:5, $F$:3  \\  \hline
			$t_{6}$	& $A$:2, $B$:5, $C$:3, $D$:7  \\  \hline
			$t_{7}$	& $A$:2, $B$:4, $D$:9, $F$:2  \\  \hline
			$t_{8}$	& $B$:5, $C$:2, $D$:10, $E$:3 \\  \hline
		\end{tabular}
	\end{center}
\end{table}

\begin{definition}
	\textbf{(The attributes of a quantitative database)} 
	\rm  They are represented by the language variable  $ R_{i} $, and the fuzzy language terms are represented by the natural language  as ($ R_{i1} $, $ R_{i2} $, $\dots$, $ R_{il} $). $ R_{i} $ can be defined by the membership function $\mu$.
\end{definition}

For example, in this study, the membership function  $\mu$ applied in the running example is shown in Figure \ref{fig:3terms}. There are five items in Table \ref{table:db}: (\textit{A}), (\textit{B}), (\textit{C}), (\textit{D}), (\textit{E}), and (\textit{F}). The three language terms are expressed as $ Low (L) $, $ Middle (M) $, and $ High (H) $. In transaction $t_1$, item \textit{B} is denoted by $ (B.L) $, $ (B.M) $, and $ (B.H) $, as  shown in Figure \ref{fig:3terms}. The other items in these transactions are calculated similarly to those of item \textit{B}.

%%%%%% definition 2 %%%%%%
\begin{definition}
\textbf{(The quantitative value of an item)} 
\rm  In the quantitative database, the value of an item is expressed as $ v_{iq}$, which represents the number of this item \textit{i} in transaction $t_{q}$.
\end{definition}

For example, in transaction $ t_{2} $, the quantitative values of the items as (\textit{B}) and (\textit{D}) are $ v_{B2}$ (= 8) and $ v_{D2} $ (= 3), respectively. Here, we can clearly see what it means.

\begin{definition}
\textbf{(Fuzzy set)} 
\rm  In the fuzzy stage, a fuzzy set refers to the set of fuzzy language terms with a membership degree (fuzzy value) converted from the quantitative value $ v_{iq} $ of the item \textit{i} in the transaction database $ t_{q} $ by the membership function $\mu$. Its specific expression is as follows:	
\begin{equation}
	f_{iq} = \mu_{i}(v_{iq}) = (\frac{fv_{iq1}}{R_{i1}} + \frac{fv_{iq2}}{R_{i2}} + \dots + \frac{fv_{iqh}}{R_{il}}),
\end{equation}
where \textit{l} refers to the number of fuzzy language terms in which the membership function $\mu$ converts to \textit{I}, $ R_{il} $ represents the \textit{l}-th fuzzy language term of item \textit{i}, and $ fv_{iql} $ represents the membership of item \textit{i} in the quantitative value $ v_{iq} $ (fuzzy value) of the \textit{l}-th fuzzy language term $ R_{il}$'s, where $ fv_{iql}\subseteq [0, 1]$.

\end{definition}

For example, in transaction $t_2$, we convert the quantitative value of the items in Table \ref{table:db} to the membership degree using the membership function. Here, we provide a detailed description of the quantitative value of (\textit{B}) in Figure \ref{fig:3terms} as ($\frac{0.6}{B.L} + \frac{0.4}{B.M}$). The transformation of all other transactions is similar to the transformation of item (\textit{B}) in transaction $t_2$. The specific results are listed in Table \ref{table:tansformed}.

\begin{table}[htb]
	\begin{center}
		\caption{Transformed results from Table \ref{table:db}.}
		\label{table:tansformed}		
		\begin{tabular}{|c|c|} 
			\hline
			\textbf{\textit{tid}} & \textbf{Fuzzy transaction}  \\ \hline
			$t_{1}$	&  $\frac{0.6}{A.L}+\frac{0.4}{A.M}$, $\frac{0.2}{B.L}+\frac{0.8}{B.M}$, $\frac{0.2}{D.M}+\frac{0.8}{D.H}$, $\frac{0.4}{E.M}+\frac{0.6}{E.H}$\\ \hline
			$t_{2}$	&  $\frac{0.6}{B.M}+\frac{0.4}{B.H}$, $\frac{0.6}{D.L}+\frac{0.4}{D.M}$\\ \hline
			$t_{3}$	&  $\frac{0.6}{A.L}+\frac{0.4}{A.M}$, $\frac{0.6}{B.M}+\frac{0.4}{B.H}$, $\frac{0.4}{D.M}+\frac{0.6}{D.H}$, $\frac{0.2}{F.L}+\frac{0.8}{F.M}$\\ \hline
			$t_{4}$	&  $\frac{0.2}{B.L}+\frac{0.8}{B.M}$, $\frac{0.4}{C.L}+\frac{0.6}{C.M}$, $\frac{0}{D.M}+\frac{1}{D.H}$, $\frac{0.8}{E.L}+\frac{0.2}{E.M}$\\ \hline
			$t_{5}$	&  $\frac{0.8}{B.M}+\frac{0.2}{B.H}$, $\frac{0.6}{C.L}+\frac{0.4}{C.M}$, $\frac{0.2}{D.L}+\frac{0.8}{D.M}$, $\frac{0.6}{F.L}+\frac{0.4}{F.M}$\\ \hline
			$t_{6}$	&  $\frac{0.8}{A.L}+\frac{0.2}{A.M}$, $\frac{0.2}{B.L}+\frac{0.8}{B.M}$, $\frac{0.6}{C.L}+\frac{0.4}{C.M}$, $\frac{0.8}{D.M}+\frac{0.2}{D.H}$\\ \hline
			$t_{7}$	& $\frac{0.8}{A.L}+\frac{0.2}{A.M}$, $\frac{0.4}{B.L}+\frac{0.6}{B.M}$, $\frac{0.4}{D.M}+\frac{0.6}{D.H}$, $\frac{0.8}{F.L}+\frac{0.2}{F.M}$\\ \hline
			$t_{8}$	&  $\frac{0.2}{B.L}+\frac{0.8}{B.M}$, $\frac{0.8}{C.L}+\frac{0.2}{C.M}$, $\frac{0.2}{D.M}+\frac{0.8}{D.H}$, $\frac{0.6}{E.L}+\frac{0.4}{E.M}$\\ \hline
		\end{tabular}
	\end{center}
\end{table}

\begin{definition}
\textbf{(Membership degree)} 
\rm  The support after the membership function is converted to membership degree and expressed as $ sup(R_{il}) $. The sum of the scalar cardinality of the fuzzy value of $ R_{il} $ is expressed as:	
\begin{equation}
	sup(R_{il})=\sum_{R_{il}\subseteq t_{q}\wedge t_{q}\in Q'}fv_{iql}.
\end{equation}
where database $ Q' $ is the database transformed by the  membership function $\mu$, which is the same as the original database $D$.

\end{definition}

For example, the support of the fuzzy terms ($ B.M $) appears in transactions $ t_{1} $, $ t_{2} $, $ t_{3} $, $ t_{4} $, $ t_{5} $, $ t_{6} $, $ t_{7} $, and $ t_{8} $, as shown in Table \ref{table:tansformed}. Thus, its support in the running database can be calculated as $ sup(B.M) $ = 0.8 + 0.6 + 0.6 + 0.8 + 0.8 + 0.8 + 0.6 + 0.8 = 5.8.

\begin{definition}

\textbf{(The minimum fuzzy value of an \textit{k}-itemset)}
\rm  An itemset \textit{X} is composed of \textit{k}-items ($k$  $\geq $ 1), and its support degree is expressed as $ sup(X) $, which represents the sum of the minimum fuzzy values of \textit{k}-items contained in \textit{X}. The definition is as follows:
\begin{equation}
	\begin{aligned}
	sup(X) & = \{X\in R_{il}| \sum_{X\subseteq t_{q}\wedge t_{q}\in Q'}min(fv_{i_{j}ql}, fv_{i_{m}ql}), i_{j}, i_{m}\in X, i_{j}\notin i_{m}\},
	\end{aligned}
\end{equation}
where the items in \textit{X} do not intersect each other.
\end{definition}

For example, the fuzzy 2-itemset $ (A.L, D.H) $ appears in transactions $ t_{1} $, $ t_{3} $, $ t_{6} $, and $ t_{7} $, as shown in Table \ref{table:tansformed}. Thus, the support of $ (A.L, D.H) $ can be calculated as $ sup(A.L, D.H) $ = \{\textit{min}(0.6, 0.8) + \textit{min}(0.6, 0.6) + \textit{min}(0.8, 0.2) + \textit{min}(0.8, 0.6)\} = \{0.6 + 0.6 + 0.2 + 0.6\} = 2.0.

\begin{definition}
\textbf{(Rare itemset)} 
\rm In a transaction database, the mining of some common itemsets aims at discovering frequent itemsets. In previous methods, a minimum frequent support \textit{minFSup} is generally set, and itemsets that meet the \textit{minFSup} specified by the user are preserved. Thus, more frequent itemsets that meet these criteria are discovered. For RPM, we set the minimum rare support \textit{minRSup}. RPM discovers itemsets that meet the criteria of \textit{minRSup}. However, the support of these rare itemsets cannot be greater than that of \textit{minFSup}.
\end{definition}

For example, we assume that \textit{sup($A$)} = 1.0, and \textit{sup($B$)} = 2.5. If we set the \textit{minRSup} to 2.0, and the \textit{minFSup} to 3.0, then, we find that item $A$ does not meet the \textit{minRSup}. The support of item $B$ is greater than that of \textit{minRSup} and less than \textit{minFSup}. Thus, item $B$ is a rare pattern that we intend to discover.

\begin{definition}
\textbf{(Fuzzy rare itemset)} 
	\rm  Considering quantitative data, the user specifies the specific minimum rare support threshold and minimum frequent support threshold, which are \textit{minRSup} and \textit{minFSup}, respectively. Only the itemsets that satisfy two conditions can be considered as fuzzy rare itemsets (FRIs), as follows:	
	\begin{equation}
		FRIs \leftarrow \{X|\ minRSup \times |D|\ \leq sup(X) \leq minFSup \times |D|\}.
	\end{equation}
\end{definition}

We take an example of 1-itemset. Through the calculation of the fuzzy values, we find that  $ (A.L) $ = 2.8, $ (B.M) $ = 5.8, $ (C.L) $ = 2.4, and $ (D.H) $ = 4.0. According to the definition of FRIs, we find that $ (A.L) $ and $ (C.L) $ are rare itemsets, whereas $ (B.M) $ and $ (D.H) $ are frequent itemsets.

\subsection{Problem statement}

To date, many algorithms have been developed for mining frequent patterns in quantitative databases. Corresponding with the emergence of quantitative values, fuzzy theory has also merged. Therefore, a mining algorithm for fuzzy frequent patterns  is proposed accordingly. However, there are many interesting but rare patterns that are ignored, mostly because of setting a low support. 

In this study, the problem of fuzzy rare pattern mining (fuzzy RPM for short) is formulated as follows. Given a quantitative database, the specific minimum support threshold and maximum support threshold, which are \textit{minRSup} and \textit{minFSup}, respectively. The goal of fuzzy RPM is to discover the complete set of FRIs that satisfies two conditions.

\section{Proposed Fuzzy Mining Algorithm: FRI-Miner}
\label{sec:algorithm}

For the mining of quantitative databases, previous studies have shown that fuzzy theory can play an important role. Through many previous statements, we find that it has been relatively mature in the mining of frequent itemsets, while RPM based on fuzzy theory has not yet been studied. In this study, we apply fuzzy theory to process the quantitative database, obtain the candidates, and finally discover FRIs by using the fuzzy-list structure \cite{lin2015fast}. In summary, the specific steps of the FRI-Miner algorithm include: 1) fuzzification phase, 2) construction of fuzzy-list, and 3) recursively mining FRIs, which are presented as follows.

\subsection{Fuzzification phase}

FRI-Miner first fuzzes the quantitative database and converts $ v_{iq} $ in quantitative data into the membership degree of the corresponding fuzzy terms through a membership function. Next, it forms a new fuzzy database by using the maximum scalar cardinality and support-ascending order.

\begin{definition}
	\textbf{(Maximum scalar cardinality)} 
	\rm  For an item \textit{i}, it uses the corresponding fuzzy term  $ R_{il} $ to denote its corresponding quantitative value $v_{iq}$. In the expressed fuzzy terms, we find the fuzzy terms that should be reserved according to the maximum scalar cardinality. This can represent the corresponding language variables as item \textit{i}.
 \end{definition}	

For example, in Table \ref{table:tansformed}, the transformed fuzzy terms with their summed fuzzy values of the linguistic variable (\textit{A}) are (\textit{A.L}: 2.8, \textit{A.M}: 1.2, \textit{A.H}: 0). Thus, the fuzzy term (\textit{A.L}) is used to represent the linguistic variable of (\textit{A}), which can be used for the later mining process of FRIs. 

\begin{table}[h]
	\begin{center}
		\caption{A revised database.}
		\label{table:reviseddb}		
		\begin{tabular}{|c|c|}
			\hline
			\textbf{\textit{tid}} & \textbf{Fuzzy transaction}  \\ \hline
			$t_{1}$	&   $\frac{0.6}{A.L}$, $\frac{0.8}{D.H}$, $\frac{0.8}{B.M}$ \\ \hline
			$t_{2}$ &  $\frac{0.6}{B.M}$\\ \hline
			$t_{3}$ &  $\frac{0.6}{A.L}$, $\frac{0.6}{D.H}$, $\frac{0.6}{B.M}$\\ \hline
			$t_{4}$	&   $\frac{0.4}{C.L}$, $\frac{1.0}{D.H}$, $\frac{0.8}{B.M}$\\ \hline
			$t_{5}$	&   $\frac{0.6}{C.L}$, $\frac{0.8}{B.M}$\\ \hline				
			$t_{6}$	&   $\frac{0.6}{C.L}$, $\frac{0.8}{A.L}$, $\frac{0.2}{D.H}$, $\frac{0.8}{B.M}$\\ \hline
			$t_{7}$	&   $\frac{0.8}{A.L}$, $\frac{0.6}{D.H}$, $\frac{0.6}{B.M}$\\ \hline
			$t_{8}$	&   $\frac{0.8}{C.L}$, $\frac{0.8}{D.H}$, $\frac{0.8}{B.M}$\\  \hline									
		\end{tabular}
	\end{center}
\end{table}

Based on the reserved $ R_{il} $ of the maximum scalar cardinality, the fuzzy quantitative database was modified to form a new fuzzy-based database. The revised database from Table \ref{table:tansformed} is presented in Table \ref{table:reviseddb}. Fuzzy items reserved by each transaction $ t_{q} $ are then sorted into ordered fuzzy items according to the support-ascending order. Note that the items retained here meet the minimum level of support. In addition to rare items, these items also have frequent items. The cross combination between them produces a new itemset that meets the two conditions.

\begin{definition}
	\textbf{(The support-ascending order)} 	
	\rm  In the fuzzy transaction database, fuzzy items that satisfy the fuzzy value condition are retained according to the maximum scalar cardinality in the transaction $ t_{q} $. Based on the fuzzy values of the reserved fuzzy items, they are sorted according to the support-ascending order to prepare for the subsequent computation.
\end{definition}

For example, we can obtain the result through the transformed fuzzy database, based on the maximum scalar cardinality and the support-ascending order.

\subsection{Fuzzy-list construction phase}

In FRI-Miner, first, $v_{iq}$ in a quantitative database is converted into a membership degree corresponding to the corresponding fuzzy terms through the membership function. Then, we form a novel fuzzy database and then construct the corresponding fuzzy-list structure \cite{lin2015fast}. We keep the fuzzy terms in $ L_{1} $, which uses three fields that make up the fuzzy-list: the transaction identifier $(tid) $, the internal fuzzy value $ (if) $, and the remaining fuzzy value $ (rf) $. We briefly introduce these details.

\begin{definition}
	\textbf{(Transaction identifier)} 
	\rm  It represents a fuzzy term $ R_{il} $ in transaction $ t_{q} $, which is a subset of the corresponding transaction in $ t_{q} $ is denoted as $ R_{il}\subseteq t_{q} $. Here, we use the corresponding $ (tid) $ to represent the presence in that transaction.
\end{definition}

According to the initial construction of the fuzzy-list, as shown earlier in $ L_{1} $, the support ascending order is used to obtain $(C.L < A.L < D.H < B.M) $ in Figure \ref{fig:liststructure}. For example, in Table \ref{table:reviseddb} and  Figure \ref{fig:liststructure}, the fuzzy item $ (B.M) $ exists in $ t_{1} $, $ t_{2} $, $ t_{3} $, $ t_{4} $, $ t_{5} $, $ t_{6} $, $ t_{7} $, and $ t_{8} $.

\begin{definition}
	\textbf{(Internal fuzzy value)} 
	\rm The fuzzy value $ (if) $ of the fuzzy term $ R_{il} $ in the transaction $ t_{q} $ is denoted as $ if(R_{il}, t_{q})$.	
\end{definition}

For example, in Table \ref{table:reviseddb}, the internal fuzzy values of $ (B.M) $ in $ t_{1} $ and $ t_{2} $ are $if(B.M, t_{1})$ = 0.8, and $if(B.M, t_{2})$ = 0.6, respectively.

\begin{definition}
	\textbf{(The resting fuzzy value  \cite{lin2015fast})} 
	\rm  In the fuzzy-list, the resting fuzzy value of $ R_{il} $ is expressed as $ rf(R_{il}, t_{q}) $. It represents the maximum fuzzy value obtained by performing a union operation. In other words, the upper bound value of all fuzzy terms $ R_{il} $ in $ t_{q} $, as shown in Figure \ref{fig:liststructure}. It is defined as:
	\begin{equation}
	rf(R_{il}, t_{q}) = max\{if(z, t_{q})  |z\in (t_{q}/R_{il})\}.
	\end{equation}
	
\end{definition}

According to Table \ref{table:reviseddb}, formed by the support-ascending order, the \textit{rf}($C.L$, $t_{4}$) is calculated as $ max\{0.8, 1.0\} $ = 1.0, and \textit{rf}($C.L$, $t_{5}$) is calculated as $ max\{0.8\} $ = 0.8, the \textit{rf}($C.L$, $t_{6}$) $ t_{6} $ is counted as $ max\{0.8,0.2,0.8\} $ = 0.8, and the \textit{rf}($C.L$, $t_{8}$) is $ max\{0.8,0.8\} $ = 0.8. Details of the fuzzy-list structure can be referred to Ref. \cite{lin2015fast}.

\begin{figure}[hbt]
	\centering
	\includegraphics[scale=0.75]{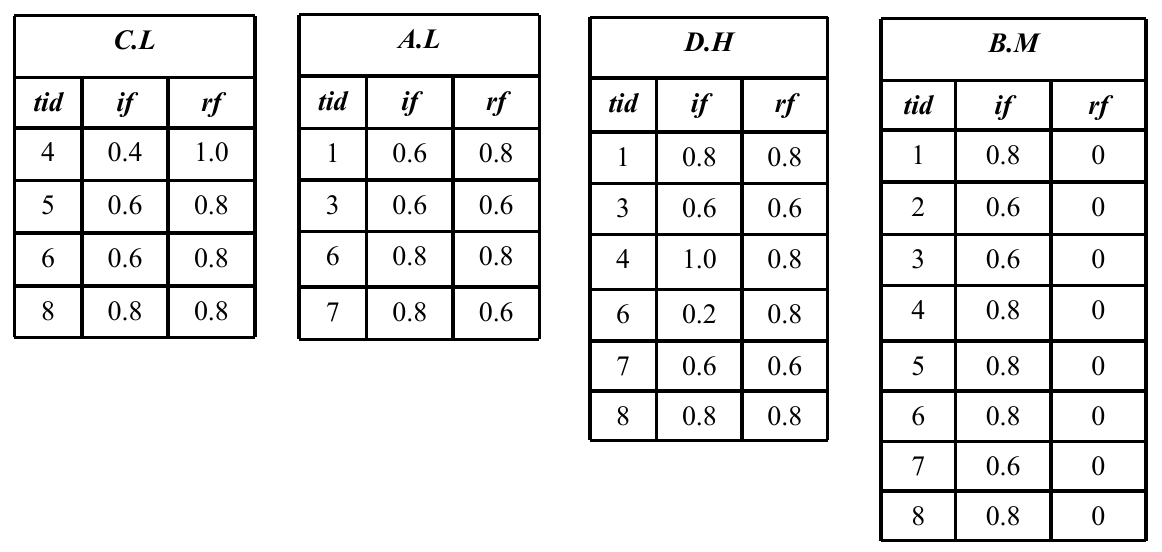}
	\captionsetup{justification=centering}
	\caption{The constructed fuzzy-list structures.}
	\label{fig:liststructure}
\end{figure}

In Figure \ref{fig:liststructure}, we can observe that the element (4, 0.4, 1.0) in the constructed fuzzy-list structure of $ (C.L) $ indicates $ (tid, if, rf) $, 4 represents the transaction $ t_{4} $, 0.4 represents the internal fuzzy value is 0.8, and 1.0 represents the resting fuzzy value after $ (C.L) $ is 1.0. The other information for the fuzzy item is similarly represented.

The fuzzy-list of the 1-itemset performs the intersection operation to form a new fuzzy-list structure of $ k $-itemsets ($ k \geq 2 $). In the recombination process, the ones with the same $ tid$ are combined. Except for the item that needs to be calculated, the internal fuzzy value of the transaction corresponding to all eligible fuzzy \textit{k}items ($ k\geq 2 $) takes the minimum value of the merged item as the remaining fuzzy value. In Figure \ref{fig:list}, the combined results of fuzzy 2-itemsets are shown as $(C.L,A.L)$, $(C.L,D.H)$, $(C.L,B.M)$, and so on.

\begin{figure*}[!hbtp]
	\centering
	\includegraphics[scale=0.5]{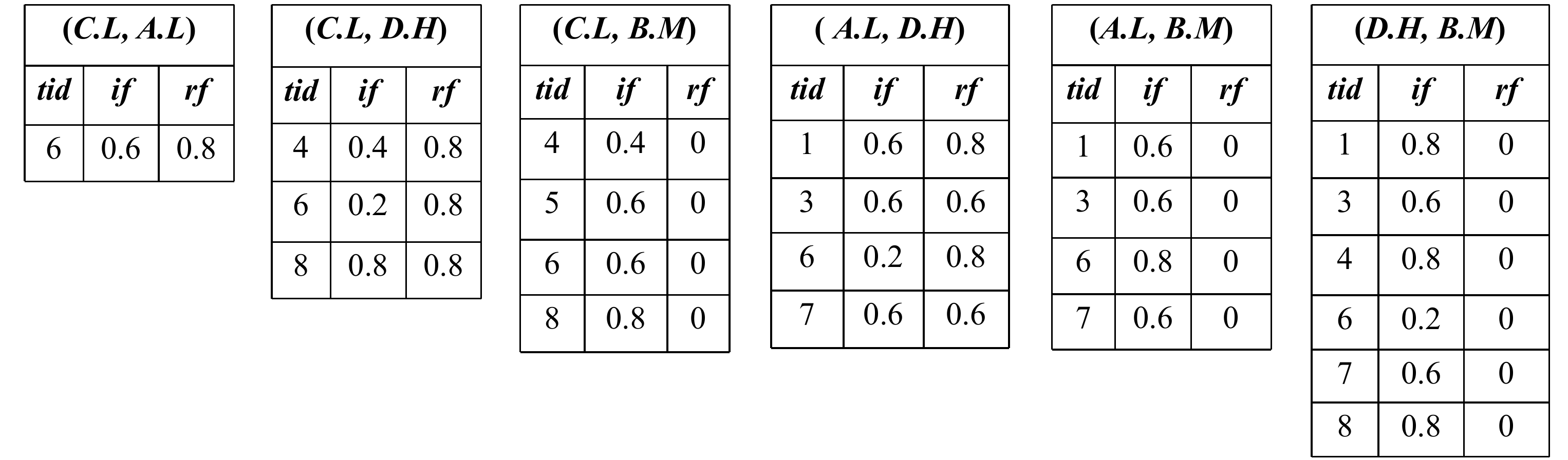}
	\captionsetup{justification=centering}
	\caption{The fuzzy-lists of fuzzy 2-itemsets.}
	\label{fig:list}
\end{figure*}

According to the similar fuzzy-list in Figure \ref{fig:liststructure}, we obtain the corresponding values of the three columns, and the corresponding support can be obtained according to the values of the second and third columns. They are defined as follows:

\begin{definition}
	\rm  In the fuzzy-list, we can calculate the sum of the inner fuzzy values of an itemset  $ R_{il} $ in a quantitative database, and it is denoted as  \textit{SUM}($R_{il}.if) $ \cite{lin2015fast} and is defined as follows:
	\begin{equation}
	SUM(R_{il}.if) = \sum_{R_{il} \subseteq t_{q} \wedge t_{q}\in Q'} if(R_{il}, t_{q}).
	\end{equation}
\end{definition}

For example, in Figure \ref{fig:liststructure}, the sum of the internal fuzzy values of $ (C.L) $ in $ D $ is calculated as (0.4 + 0.6 + 0.6 + 0.8) = 2.4.

\begin{definition}
\rm  In the fuzzy-list, we calculate the sum of the resting fuzzy values of $ R_{il} $ in the quantitative database $ D $ is defined as \textit{SUM(R$_{il}$.rf)}, \cite{lin2015fast}. Here, the $ rf $ value is calculated from the third column as follows:
\begin{equation}
	SUM(R_{il}.rf) = \sum_{R_{il} \subseteq t_{q} \wedge t_{q}\in Q'}rf(R_{il}, t_{q}).
\end{equation}
\end{definition}
	
For example, in Figure \ref{fig:liststructure}, the sum of the resting fuzzy values for $ (C.L) $ in $ D $ is calculated as (1.0 + 0.8 + 0.8 + 0.8) = 3.4.

The fuzzy value in the fuzzy-list was used to obtain the total support. Here, we use the support-ascending order to construct a set of fuzzy-lists. According to the measurement of the support of the corresponding item, we obtain the rare itemsets between the support tends to the minimum and the maximum specified support threshold, and the fuzzy itemsets with the support above the maximum support. We also obtain a set of items that we can intersect further in Figure \ref{fig:list}. Then, it sorts the results according to the support-ascending order and adopts the pruning strategies to prune the search space with respect to an  enumerated tree \cite{rymon1992search}. Then, the 1-itemset is gradually expanded to 2-itemsets, 3-itemsets, and other combinations, which are shown in Figure \ref{fig:enumeratetree}. Finally, a set of rare itemsets that satisfy support can be obtained.
	
\begin{figure}[!hbtp]
	\centering
	\includegraphics[scale=0.4]{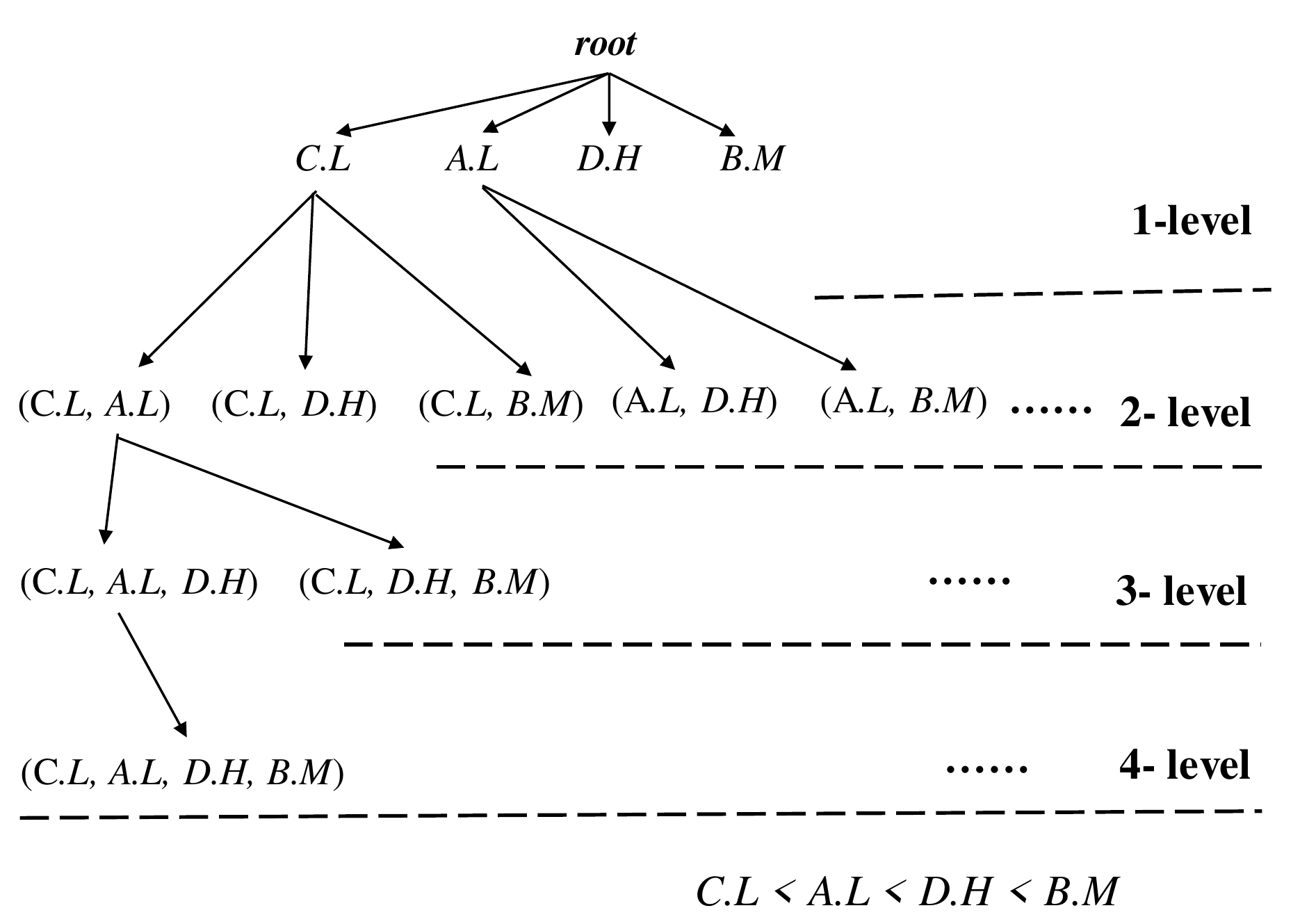}
	\captionsetup{justification=centering}
	\caption{An enumeration tree of the used example.}
	\label{fig:enumeratetree}
\end{figure}

In Figure \ref{fig:enumeratetree}, we narrow the search space by pruning to remove items below the minimum threshold. The remaining items are combined to form a new $k$-itemset. The structure of the fuzzy-list of the $k$-itemset is similar to that of the 1-itemset. The specific description is as follows:

\begin{theorem}
	\rm According to the concept of FRIs, for an itemset $ X $ in the fuzzy-list structure, if its  \textit{SUM(X.if)} is no less than the minimum fuzzy rare support and no more than the maximum fuzzy frequent support, it is seen as a FRI. If \textit{min}(\textit{SUM(X.if)}, \textit{SUM(X.rf)}) of $ X $ is no less than the minimum fuzzy rare support, new itemsets are required to be generated. If the sum of the resting fuzzy values of $ R_{il} $ is no less than the minimum fuzzy rare support (\textit{minRSup} $ \times$ $|Q'|$), extensions of $ R_{il} $ may be a FRI. If the summation of the resting fuzzy values of $ R_{il} $ is smaller than the minimum fuzzy rare support, any extensions of $ R_{il} $ will neither be a FRI nor a FFI. Thus, there is no need to construct a new fuzzy-list structure for its extension.
\end{theorem}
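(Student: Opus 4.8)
The plan is to reduce all four assertions to a single upper-bound lemma on the support of extensions, and then read off each sentence either as a direct consequence of the definitions or as the contrapositive of that bound. Throughout I use that $Q'$ and $D$ have the same cardinality, so $minRSup \times |Q'| = minRSup \times |D|$.

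First I would record the elementary identity $SUM(X.if) = sup(X)$. By the definition of the minimum fuzzy value of a $k$-itemset, the internal fuzzy value $if(X,t_q)$ stored in the fuzzy-list is exactly the per-transaction minimum $\min_{i_j \in X} fv_{i_j q l}$, so summing over the transactions that contain $X$ reproduces $sup(X)$. Combined with the definition of a FRI, this makes the first sentence immediate: the hypothesis $minRSup \times |Q'| \leq SUM(X.if) \leq minFSup \times |Q'|$ is literally the condition $minRSup \times |D| \leq sup(X) \leq minFSup \times |D|$, so $X$ is a FRI.

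The core of the argument is the bound
\[
sup(X') \leq \min\bigl(SUM(X.if),\, SUM(X.rf)\bigr)
\]
for every extension $X' \supseteq X$ obtained by appending items that follow $X$ in the support-ascending order. I would establish it transaction by transaction in two pieces. Since every transaction containing $X'$ also contains $X$, the index set of the sum defining $sup(X')$ is a subset of the one defining $SUM(X.if)$ and $SUM(X.rf)$. Within such a transaction, adding items can only lower a minimum, giving $if(X',t_q) \leq if(X,t_q)$; summing over the (smaller) transaction set and using nonnegativity of the terms yields $sup(X') \leq SUM(X.if)$. For the resting-value half, every item $z$ appended to form $X'$ lies after $X$ in the ordering, hence $z \in t_q / X$, so $if(z,t_q) \leq \max\{if(y,t_q) : y \in t_q / X\} = rf(X,t_q)$ by the definition of the resting fuzzy value; since $if(X',t_q)$ is a minimum that includes at least one such $if(z,t_q)$, we get $if(X',t_q) \leq rf(X,t_q)$, and summing gives $sup(X') \leq SUM(X.rf)$. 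Crucially, this second inequality uses only that the appended items follow $X$, so it holds for extensions of arbitrary length, not merely one-item extensions.

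With the lemma in hand the remaining statements follow. For the pruning claim, suppose $SUM(R_{il}.rf) < minRSup \times |Q'|$; then $sup(X') \leq SUM(R_{il}.rf) < minRSup \times |Q'|$ for every extension $X'$ of $R_{il}$, so $X'$ violates the lower FRI threshold and can be neither a FRI nor a FFI, justifying that no fuzzy-list need be built for any extension. The two positive statements are exactly the contrapositive: when $SUM(R_{il}.rf) \geq minRSup \times |Q'|$, or more sharply when $\min(SUM(X.if), SUM(X.rf)) \geq minRSup \times |Q'|$, the upper bound on extension support does not fall below the threshold, so an extension remains a candidate and the algorithm must generate it. I expect the main obstacle to be the transaction-level inequality $if(X',t_q) \leq rf(X,t_q)$: one must argue carefully that the support-ascending order guarantees every newly appended item sits in $t_q / X$ and is therefore dominated by the maximum defining $rf$, and that this domination survives both taking the minimum over all of $X'$ and summing over the shrinking transaction set, so that the bound genuinely covers extensions of every length.
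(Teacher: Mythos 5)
Your proposal is correct and takes essentially the same route as the paper's own proof: both rest on the per-transaction bound $if(X',t_q) \leq \min\{if(X,t_q),\, rf(X,t_q)\}$ for any extension $X'$ (using that the transactions containing $X'$ form a subset of those containing $X$ and that appended items lie in $t_q/X$), summed to give $sup(X') \leq \min\bigl(SUM(X.if),\, SUM(X.rf)\bigr)$, from which the pruning condition follows by contraposition. If anything, your version is the more careful one --- the paper states these relations as equalities (``$if(X',t_{q}) = \min\{\dots\}$'' and ``$if(X') = SUM(X.rf)$''), which are evidently typos for the inequalities you prove, and you additionally make explicit why falling below the rare threshold also rules out being a FFI.
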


\begin{proof}	
		\rm	For $\forall t_{q}\supseteq X'$, suppose fuzzy term $ R_{il} $ is denoted as \textit{X}, and \textit{X}' is the extension of \textit{X} ($X\subset X' \subseteq t_{q} \Rightarrow X'.tids \subseteq X.tids$), thus $ (X'-X)$ = $(X'/X) $ and $(X'/X)$ $\subseteq(t_{q}/X) $. Since $if(X',t_{q})$ = $min\{if(X,t_{q})$, $rf(R_{il}$, $t_{q})\}$, it holds $if(X')$ = \textit{SUM(X.rf)}  \cite{lin2015fast}. Thus, with the definition of a FRI, both \textit{minRSup} and \textit{minFSup} are able to determine the promising candidates for FRIs by quickly pruning the search space.
\end{proof}

For example, we can consider the 3-itemsets (\textit{A.L}, \textit{D.H}, \textit{B.M}), which is an extension of 2-itemsets $ (A.L$, $D.H) $. Because the sum of the resting fuzzy values of $ (A.L$, $D.H) $ is calculated as (0.8 + 0 + 0.8 + 0.4) = 2.0. The extension $ (A.L$, $D.H$, $B.M) $ in the search space of FRI-Miner with respect to an enumeration tree \cite{rymon1992search} must be generated. For the 3-itemsets $ (C.L$, $A.L$, $D.H) $, which is the extension of 2-itemsets $ (C.L$, $A.L) $, the sum of the resting fuzzy values of $ (C.L$, $A.L) $ is calculated as 0.8 $ < $ 2.0. Thus, the extension of $ (C.L$, $A.L$, $D.H) $ is unnecessary because the fuzzy value of the 3-itemsets is lower than the specified minimum fuzzy value. A detailed description of the FRI-Miner algorithm is presented in the following section. The pseudocode for the construction of the fuzzy-list here is similar to that of the frequency-based FFI-Miner algorithm \cite{lin2015fast}.

\subsection{Fuzzy rare itemset mining phase}

Note that we have learned about both rare and frequent itemsets from the previous sections. In this section, we go into more details about the mining processes of FRIs. In general, there are three types of rare itemsets, as follows: 

\begin{enumerate}
	\item  The itemsets with only fuzzy rare items: all items in an itemset are rare, and the itemset may be rare.
	
	\item  The itemsets with fuzzy rare and fuzzy frequent items: an itemset that contains rare and frequent items internally may be a rare itemset. For example, we can calculate the sum of the inner fuzzy values of itemset  $ (A.L) $ is calculated as (= 2.8 $ > $ 2.0), and the sum of the inner fuzzy values of the itemset  $ (D.H) $ is calculated as (= 4.0 $\geq$ 4.0), but we can calculate the sum of the inner fuzzy values of the itemset  $ (A.L,D.H) $ is calculated as (= 2.0 $\geq$ 2.0).

	\item  The itemsets with only fuzzy frequent items: an itemset in which all items are frequent may be rare. For example, we can calculate the sum of the inner fuzzy values of the itemset $ (D.H) $ as 4.0. The sum of the inner fuzzy values of the itemset  $ (B.M) $ is calculated as (= 5.8 $ > $ 4.0), but the value of the itemset  $ (D.H,B.M) $ is calculated as (4.0 $ > $ 3.8 $ > $ 2.0).

\end{enumerate}

\begin{algorithm}[!hbtp]
	\caption{FRI-Miner} 
	\label{algorithm_1}
	\textbf{Input:} A quantitative database $D$; \textit{minRSup}; \textit{minFSup}.\\
	\textbf{Output:} \textit{FRIs}: all valid fuzzy rare itemsets.
	\begin{algorithmic}[1]
		\State First, it fuzzes the quantitative database to form a new fuzzy database, and then forms a corresponding fuzzy-list structure for each item.
		\For{ each fuzzy-list $X\in FLs$}
		\If{\textit{SUM(X.if)} $\geq$ $minRSup \times |D|$}
		\State $FLs \gets X \cup FLs $; // the candidates
		\If{\textit{SUM(X.if)} $\leq$ $minFSup \times |D|$}
		\State $FRIs \gets X \cup FRIs$; // output the results
		\EndIf
		\EndIf
		\If{\textit{SUM(X.rf)} $\geq$ $minRSup \times |D|$}
		\State initialize $exFLs \gets null $;
		\For {each fuzzy-list $Y$ after $X \in FLs$}
		\State call the construct($X$,$Y$) procedure, then put the built fuzzy-list into \textit{exFLs};
		\EndFor
		\State call FRI-Miner (\textit{exFLs}, \textit{minRSup});
		\EndIf
		\EndFor
		\State return \textit{FRIs}		
	\end{algorithmic}
\end{algorithm}

In the above discussion, we have carried out detailed concepts, data structure, theorem, and strategies of the proposed FRI-Miner algorithm. Algorithm \ref{algorithm_1}  describes the complete details of FRI-Miner for discovering FRIs. Notice that the itemset that satisfies  \textit{SUM(X.if)} $\geq$ $minRSup \times |D|$ will be rare or frequent (lines 2-3), and it is used to generate the new itemsets (extensions of $X$,  lines 11-12). This guarantees the correctness and completeness of the final discovered results of FRI-Miner.

According to the above pseudo-code, it can be known that the quantitative database as follows shown in Table \ref{table:db} is converted to a fuzzy database as follows shown in Table \ref{table:tansformed} at the beginning of the algorithm, and the membership function $\mu$ is used to blur the quantitative value at this phase. The fuzzy items in the fuzzy database obtained in the first phase are clearly planned in the fuzzy-list structure, and the fuzzy value that meets the user-specified threshold is selected by the maximum scalar cardinality and support-ascending order strategy. It keeps the rare and frequent items that meet the conditions and outputs rare items. It combines the remaining fuzzy items in a tree structure and deletes the itemsets that do not meet the conditions through the pruning strategy. Finally, it outputs complete FRIs that meet the conditions. 
\section{Experimental Evaluation}
 \label{sec:experiments}
 
In this study, we address the problem of mining FRIs in a quantitative database. To the best of our knowledge, FRI-Miner is the first fuzzy-based algorithm for mining rare itemsets with linguistic meanings. It can discover rare itemsets that satisfy the two conditions of fuzzy values. The RP-Growth algorithm \cite{tsang2011rp}, which aims to mine rare itemsets, was selected as the baseline to evaluate the validity and performance of the proposed FRI-Miner algorithm.

A total of six benchmark datasets were used in this experiment: retail, accidents, foodmart, chess, mushroom, and kosarak. Real-life datasets are available in a public repository\footnote{\url{http://www.philippe-fournier-viger.com/spmf/}}. Their characteristics and descriptions can be found in previous studies \cite{lin2015rwfim,lin2017fdhup}. All algorithms were implemented using Java language and executed on an Intel Core 6 Duo 3.00 GHz machine with 8 GB of RAM running Windows 10.

We fuzz the quantitative datasets according to a predefined membership function. For the foodmart dataset, we used a different membership function, which is [100, 600, 1100]. The reason is to ensure that the data in the dataset are evenly distributed in the set membership function, so that the final result obtained is more accurate. The other datasets were distributed between the specified [1, 21, 31]. According to the previous algorithm, to evaluate the performance of the compared algorithms, we analyzed them from three aspects: running time, memory usage, and the number of generated patterns.  For example, the changes in runtime can reflect whether the designed data mining algorithm is acceptable within a reasonable execution time. The results obtained with the compared algorithms are shown in Figure \ref{datasets Runtime}, Figure \ref{RTMIN}, Table \ref{table:pattern},  and Figure \ref{PMAX}, respectively. The detailed results are as follows.

\subsection{Effectiveness}

As mentioned before, FPM and RPM are two different mining tasks. Moreover, the fuzzy-theoretic-based RPM is also different from the traditional RPM.  In the mining rare itemsets, two thresholds \textit{minRSup} and \textit{minFSup} are set in FRI-Miner. As shown in Table \ref{table:pattern} and Figure \ref{datasets Runtime}, the constraints of the two thresholds further enable the FRI-Miner algorithm to run efficiently, shorten the running time, and identify meaningful rare itemsets. First, FRI-Miner uses \textit{minRSup} to remove itemsets that are lower than \textit{minRSup}; that is, these itemsets without practical significance. Then, fuzzy-based rare itemsets are discovered from the potential candidates. In the experiments, \textit{minRSup} and \textit{minFSup} are expressed as \textit{minSup} and \textit{maxSup}, respectively, as shown in Table \ref{table:pattern}. Note that the number of \#\textit{FRIs} is derived by FRI-Miner, and the number of \#\textit{RIs} is derived by the RP-Growth.

%%%%%%%%%%%%%%%%%%%%%%%%%%%%%%%%%%%%%%%%%%%%%%%%%%Friedman
\begin{table*}[ht]
	
	\fontsize{6.8pt}{9pt}\selectfont
	\centering
	\caption{Number of patterns (candidates and final results) under various parameter settings}
	\label{table:pattern}
	\begin{tabular}{|cc|lllllll|}
		%\toprule
		\hline\hline
		\multirow{2}*{\textbf{}}&
		\multirow{2}*{\textbf{}}
		&\multicolumn{7}{c|}{\textbf{\# of patterns under various parameter settings}}\\
		\cline{3-9} %\morecmidrules
		&& \textit{test}$_1$ & \textit{test}$_2$ & \textit{test}$_3$ &  \textit{test}$_4$ &  \textit{test}$_5$  &  \textit{test}$_6$ &  \textit{test}$_7$ \\ \hline

		&  \textbf{\textit{minSup}} & 100 &	100 & 100 &	100 & 200 &	200& 200 \\
		&  \textbf{\textit{maxSup}} & 250 &	300 & 350 &	400 & 250 &  300& 350 \\
		(a) accidents  &  \textbf{\#\textit{FRIs}} & 6,638 & 10,791 &  17,966 & 29,032 & 837   & 1,516 & 2,293  \\
		& \textbf{\#\textit{RIs}} & 387,624 & 539,064 & 2,924,992 & 3,697,552 &	500 & 1,656 & 28,868 	\\
		\hline

		&  \textbf{\textit{minSup}} & 5 &	5 &	5 &	10 & 10 &	10 & 10  \\
		&  \textbf{\textit{maxSup}} & 13 & 	15 & 17 & 13 & 15 &	17 & 20 \\
		(b) foodmart  &  \textbf{\#\textit{FRIs}} & 4,514 &  7,905 & 9,525 & 4,263 & 7,654 &	9,274 & 10,031  \\
		& \textbf{\#\textit{RIs}} & 930 & 1,223 & 1,402 & 554 & 847 & 1,026 & 1,145	\\
		\hline

		&  \textbf{\textit{minSup}} & 100 &	100 & 100 &	150 & 150 &	150 & 150 \\
		&  \textbf{\textit{maxSup}} & 200 &	300 & 400 &	200 & 300 &	400 & 500  \\
		(c) chess &  \textbf{\#\textit{FRIs}} & 1,448 & 18,780 & 99,014 & 	187 & 	2,013 & 12,886 & 59,374  \\
		& \textbf{\#\textit{RIs}} & 16,414,992 & 50,739,408 &  675,273,704 & 122,040 & 1,309,176 & 	64,897,692 & 237,222,940	\\
		\hline

		&  \textbf{\textit{minSup}} & 1000 & 1000 &	1000 &	1500 &	1500 &	1500 & 1500 \\
		&  \textbf{\textit{maxSup}} & 3000 & 3500 &	4000 &	3000 &	3500 &	4000 & 4500 \\
		(d) kosarak &  \textbf{\#\textit{FRIs}} & 2,242 & 2,395 & 2,546 & 801 & 	882 & 	977 & 1,052  \\
		& \textbf{\#\textit{RIs}} & 642,536 & 654,922 & 661,168 & 195,507 & 198,575 & 201,007 & 202,013	\\
		\hline
		
		&  \textbf{\textit{minSup}} & 100 &	100 &	100 &	150 &	150 &	150 & 150 \\
		&  \textbf{\textit{maxSup}} & 200 &	300 &	400 &	200 &	300 &	400 & 500 \\
		(e) mushroom  &  \textbf{\#\textit{FRIs}} & 1,344 & 5,862 & 15,405 & 	243 & 	1,185 & 3,062 & 5,185  \\
		& \textbf{\#\textit{RIs}} & 364,992 & 1,269,244 & 1,915,464 & 360,512 & 	475,200 & 	882,000 & 2,383,752 	\\
		\hline
		
		&  \textbf{\textit{minSup}} & 95 &	95 & 95 & 95 &	100 &	100 &   100  \\
		&  \textbf{\textit{maxSup}} & 200 &	300 & 400 & 500 &	200 &	300 &   400  \\
		(f) retail &  \textbf{\#\textit{FRIs}} & 1,261 & 1,581 & 1,734 & 1,795 & 1,128 & 1,441 &  1,593 \\
		& \textbf{\#\textit{RIs}} & 1,740 & 3,244 & 3,887 &	4,378 &	1,493 & 2,929 &  3,552 	\\
		\hline
		
		\hline\hline
		%\bottomrule
	\end{tabular}
\end{table*}
%%%%%%%%%%%%%%%%%%%%%%%%%%%%%%%%%%%%%%%%%%%%%%%%%%%%%%

\begin{figure*}[ht]
	\centering
	\includegraphics[scale=0.4]{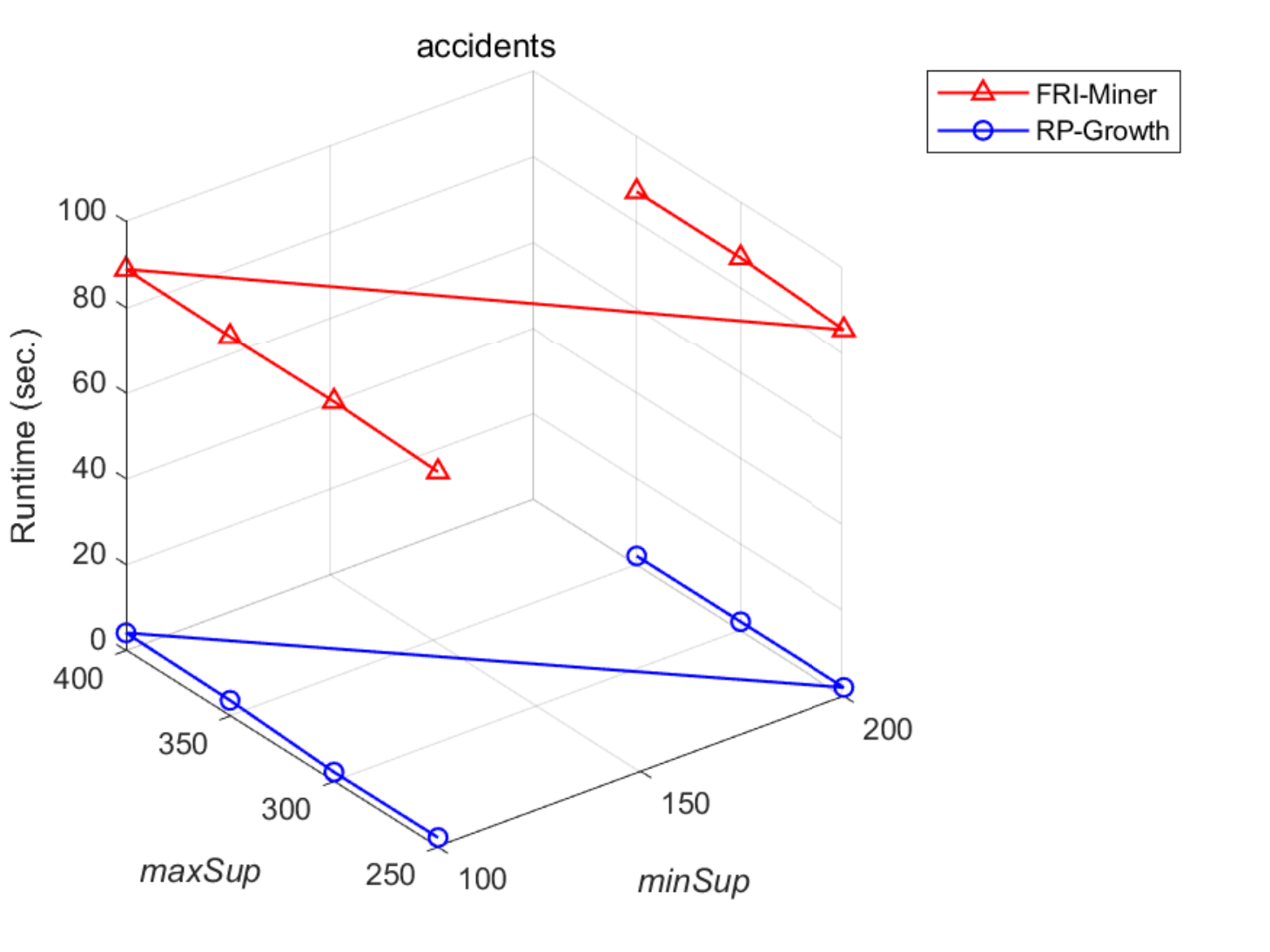}
	\includegraphics[scale=0.4]{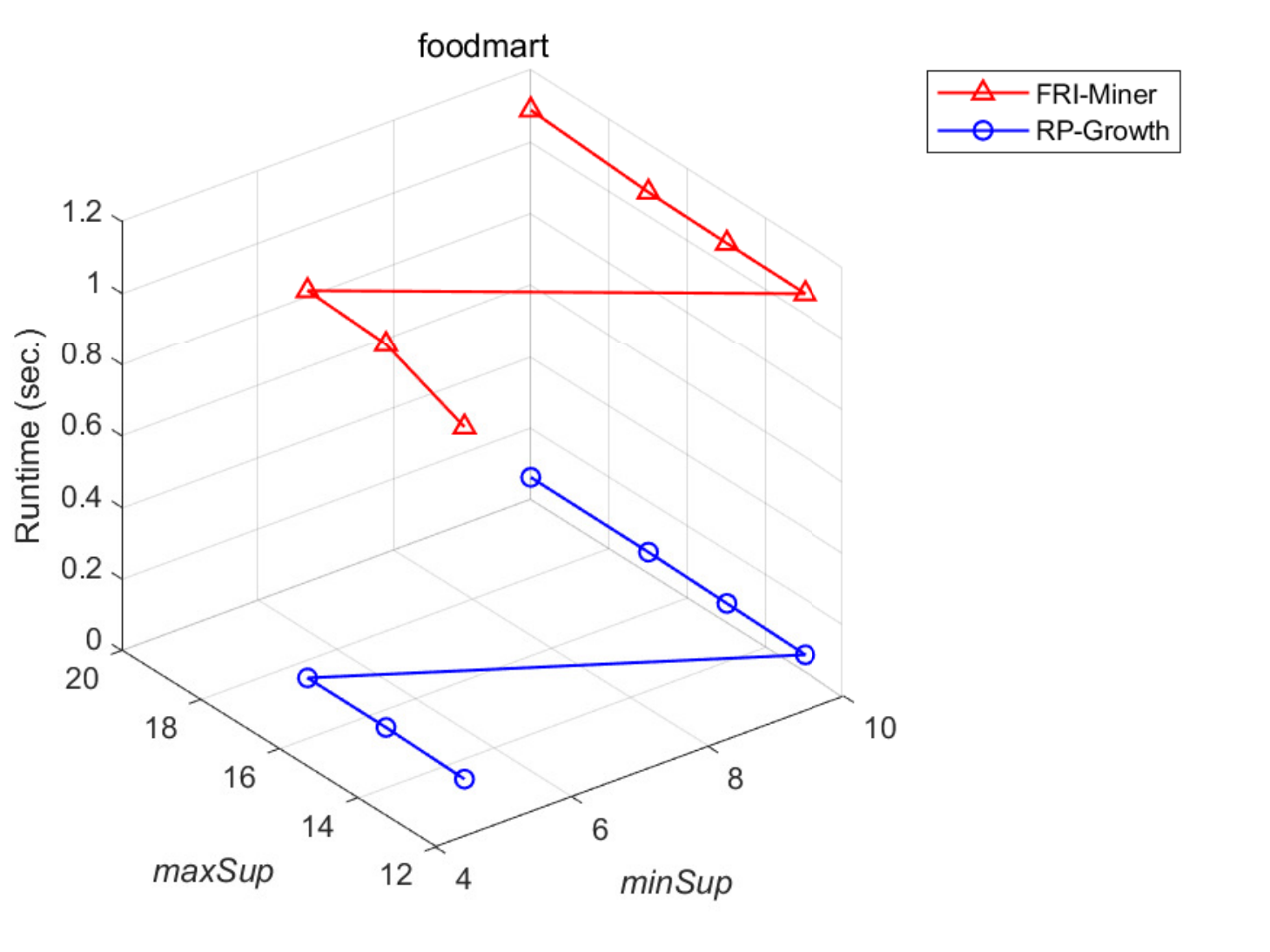} % BMS
	\includegraphics[scale=0.4]{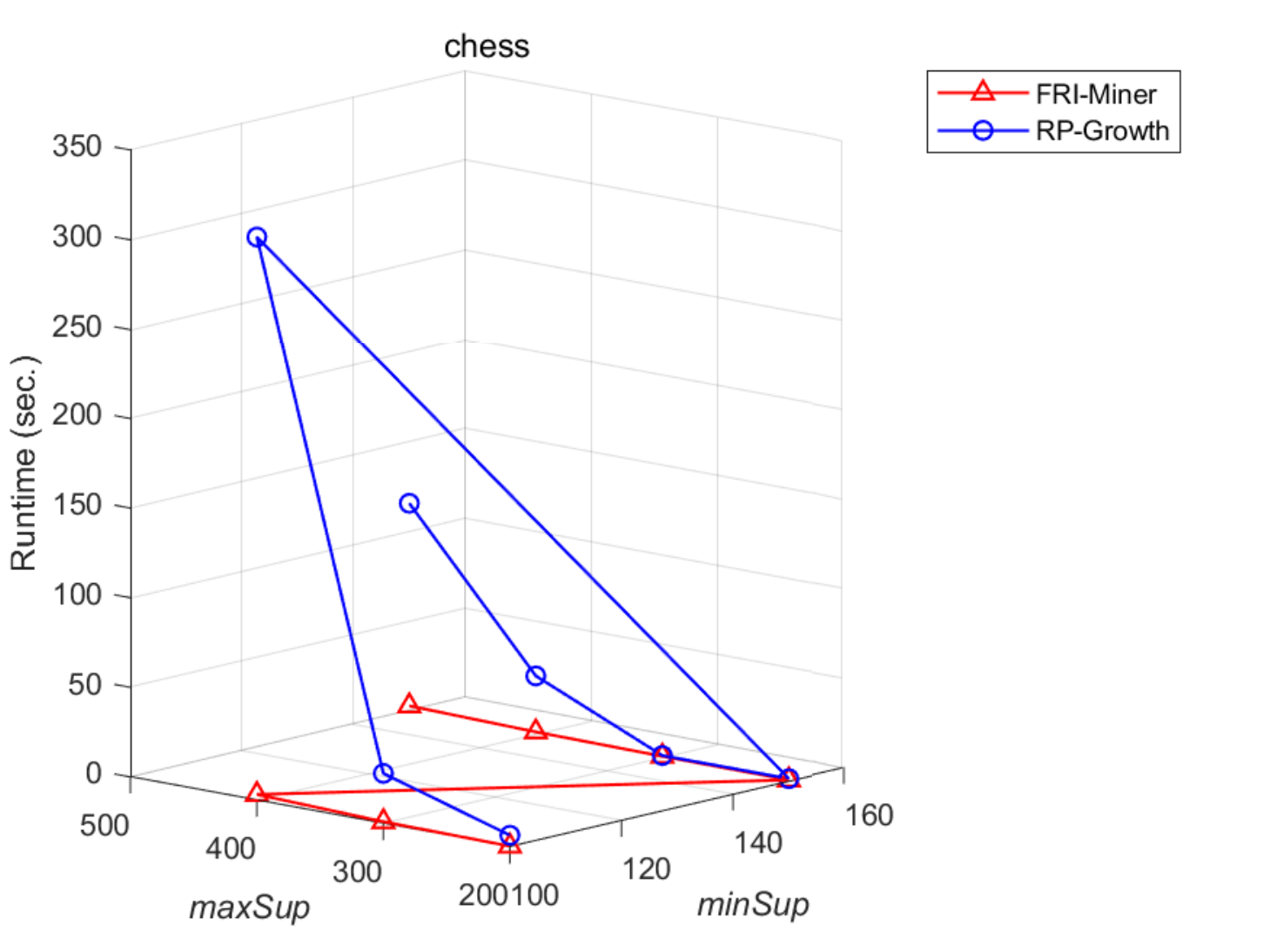}	
	\includegraphics[scale=0.4]{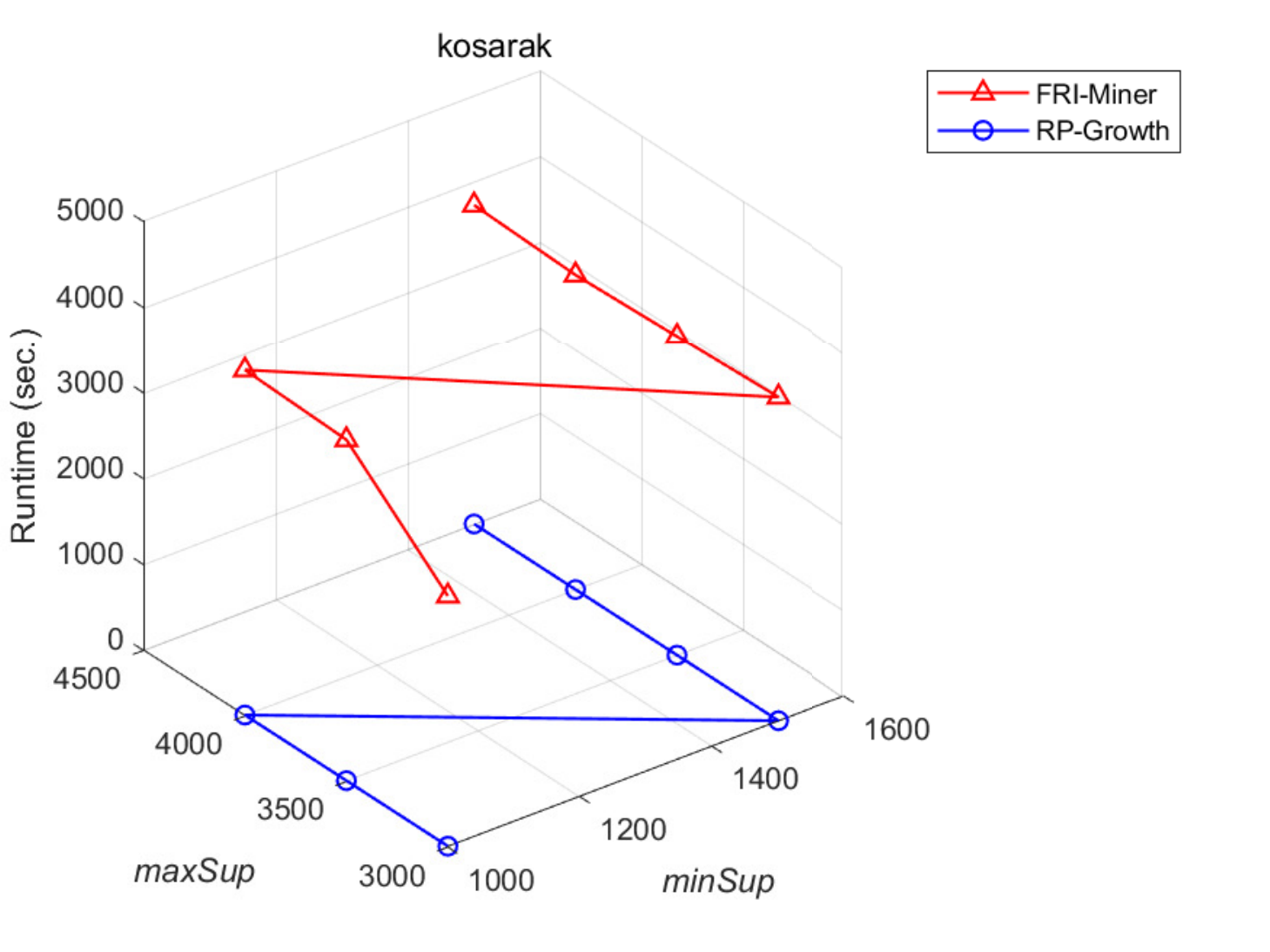}
	\includegraphics[scale=0.4]{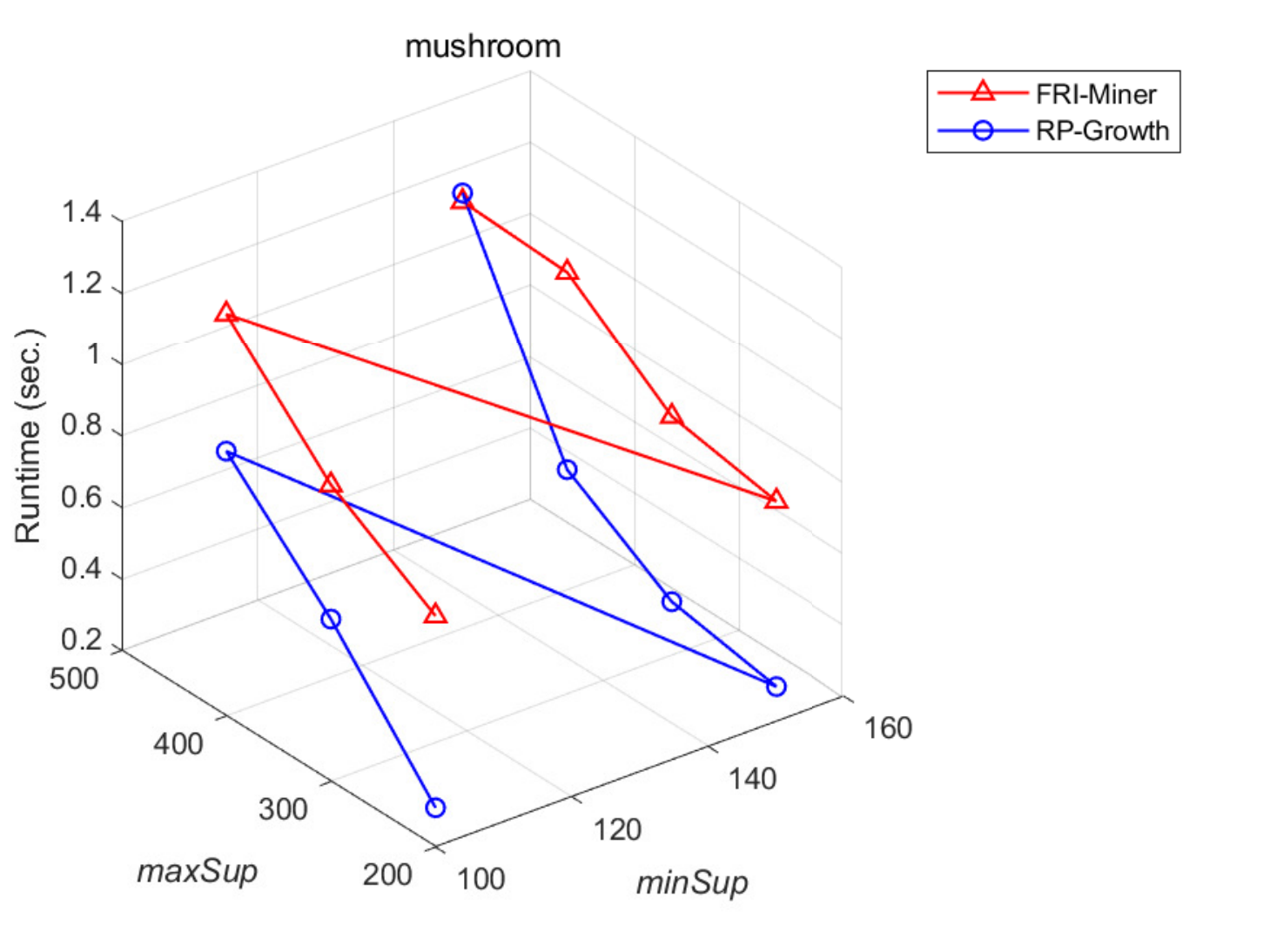}
	\includegraphics[scale=0.4]{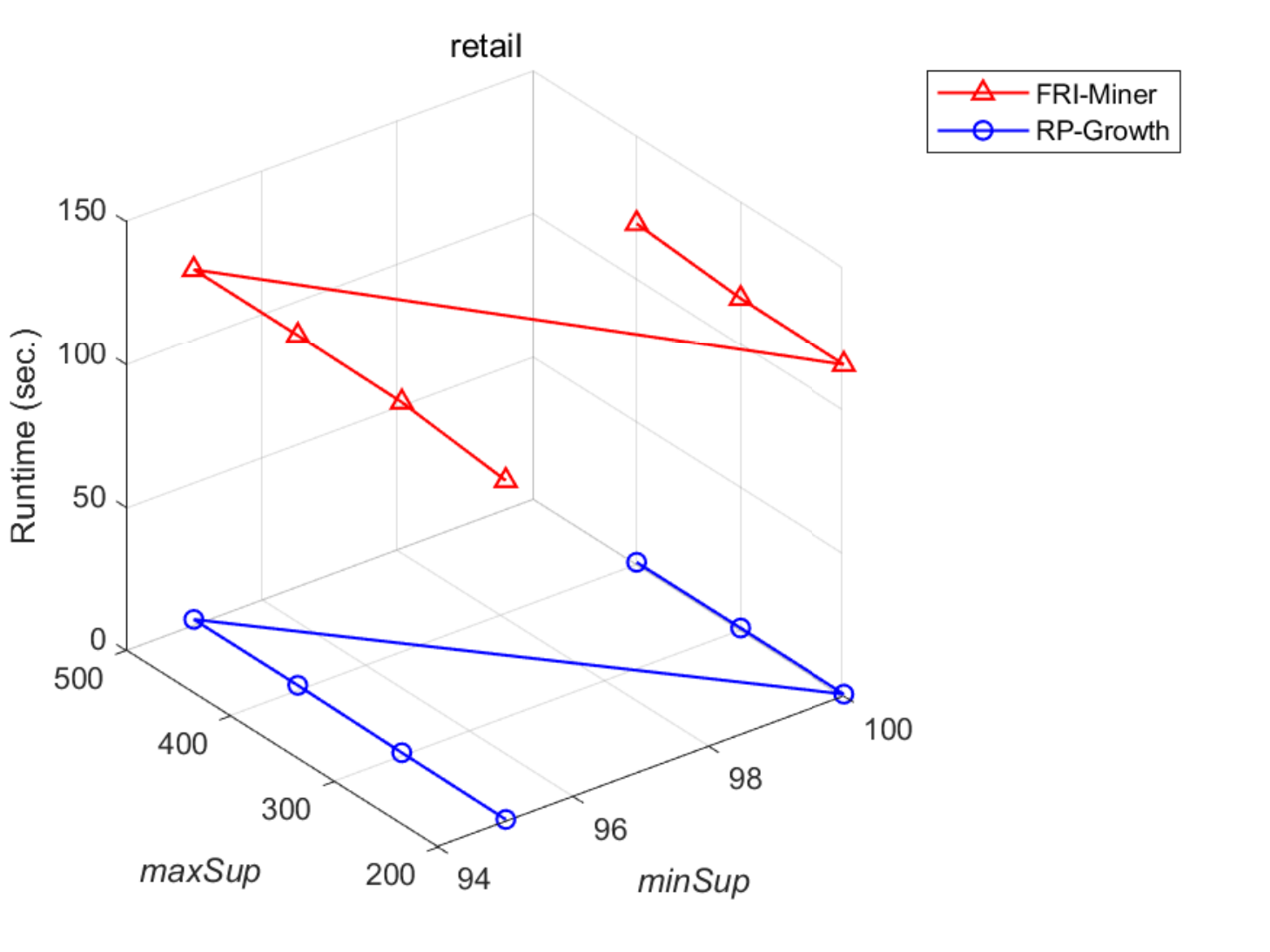}
	
	\caption{Runtime vs. \textit{minSup}/\textit{maxSup}}
	\label{datasets Runtime}
\end{figure*}

\begin{figure*}[!htbp]
	\centering
	\includegraphics[trim=90 0 0 0,clip,scale=0.63]{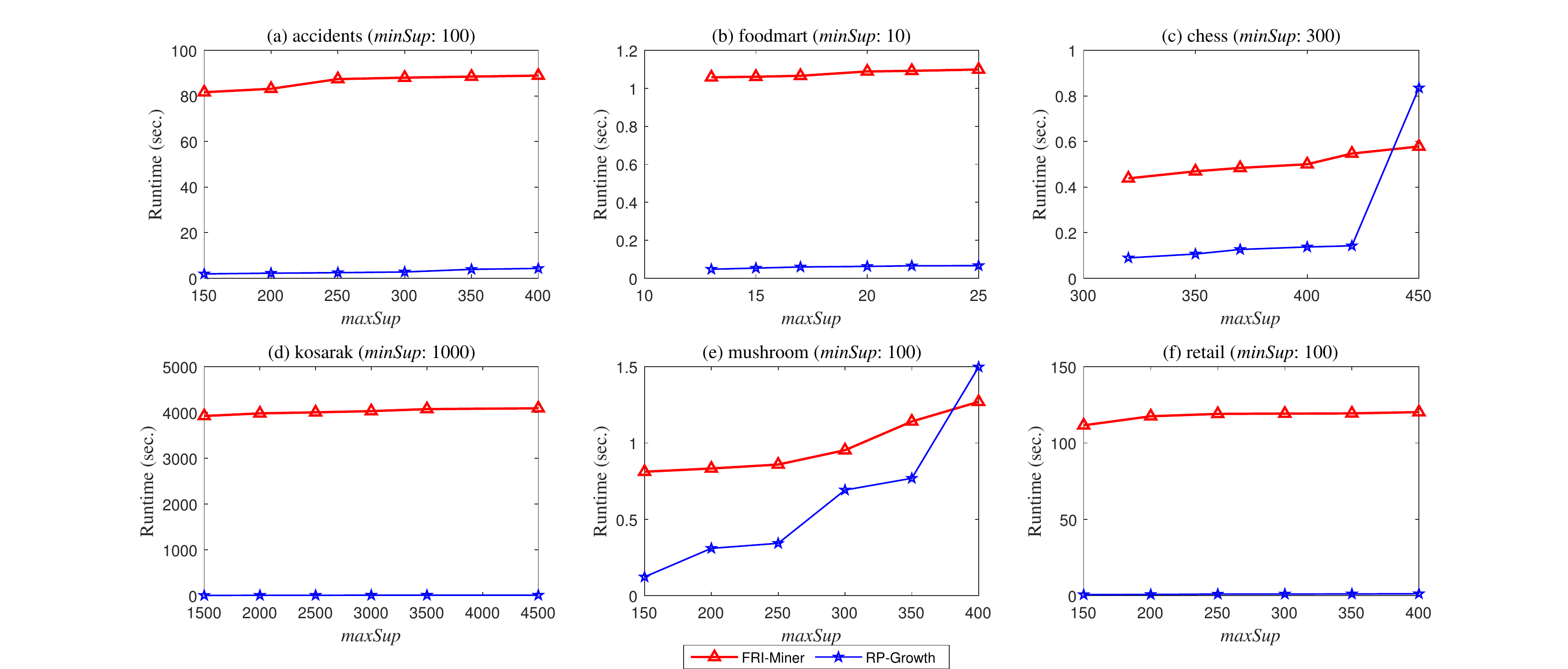}
	\caption{Runtime vs. \textit{maxSup}}
	\label{RTMAX}
\end{figure*}

\begin{figure*}[!htbp]
	\centering
	\includegraphics[trim=90 0 0 0,clip,scale=0.63]{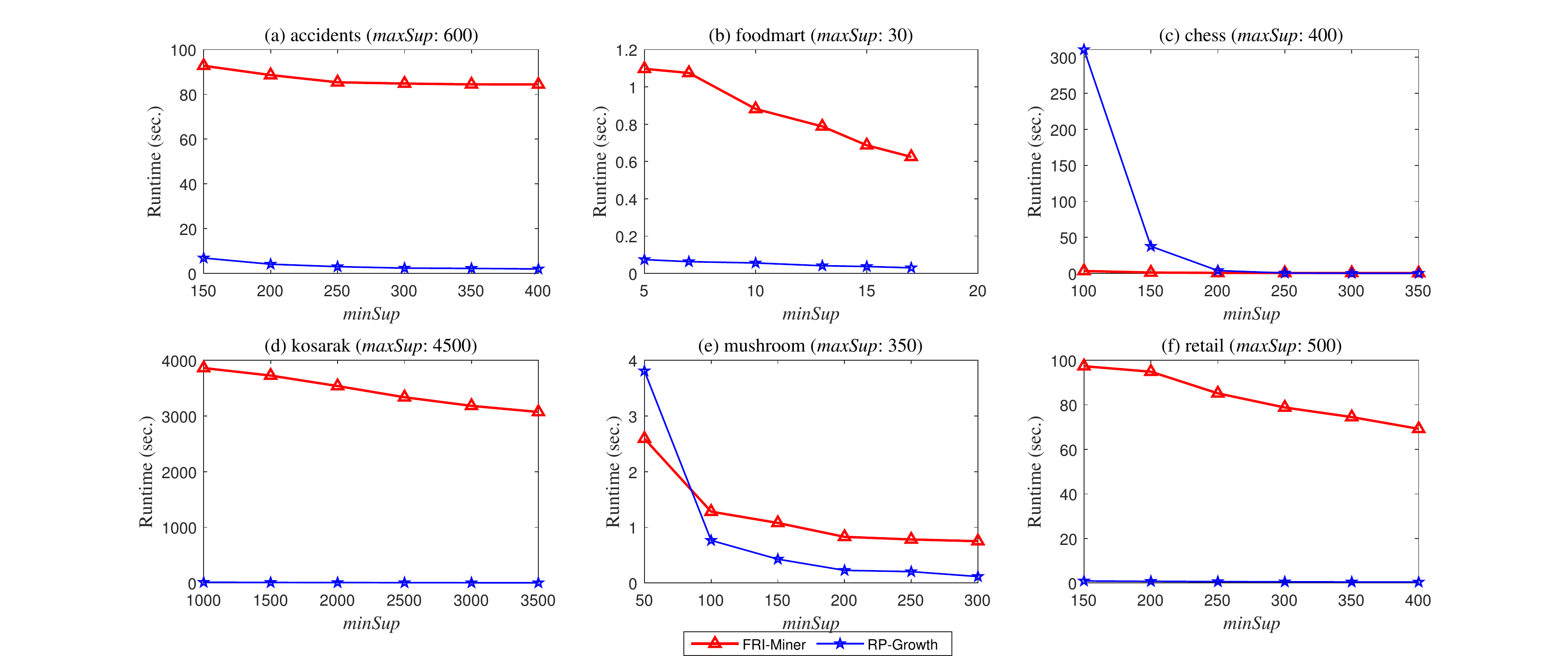}
	\caption{Runtime vs. \textit{minSup}}
	\label{RTMIN}
\end{figure*}

In general, in the experimental process, the running time reflects the execution efficiency of the algorithm. According to the experimental results, we find that the running time of the FRI-Miner algorithm is relatively constantly shortened, and the number of discovered patterns also increases. This RP-tree-based RP-Growth algorithm is also a prominent algorithm in mining rare itemsets, and FRI-Miner is also based on its efficient performance. 

It can be observed that the setting of two thresholds also affects the mining time of the itemset.  In addition, normal behavior/patterns often appear in the form of  infrequent itemsets, but fewer rare behaviors/patterns can be found with FRI-Miner using fuzzy theory. For example, consider the retail dataset, when setting parameters as \textit{minSup} from 95 to 100, and \textit{maxSup} from 200 to 500, the number of FRIs is changed from 1,261 to 1,593, while the number of RIs can reach 1,740 to 3,552.  It is clear that the two types of patterns, FRIs and RIs, are different, and the fuzzy-based patterns have more useful meaning. Thus, the concept of FRI has excellent fuzzy modeling capabilities and has a linguistic meaning.

\subsection{Efficiency w.r.t. runtime}

\textbf{Runtime vs. support}. The effects of the \textit{minSup} and \textit{maxSup} thresholds were evaluated first.  For each dataset, we uniquely specified the corresponding  thresholds. When the minimum support threshold we choose continues to increase and the maximum support threshold remains unchanged, we find that the running time of FRI-Miner will continue to decrease. The results of the detailed experiments are shown in Figure \ref{RTMIN}(a) to \ref{RTMIN}(f). Similarly, when the minimum support threshold is unchanged and the maximum support threshold continuously increases, the running time continuously increases as the mining range increases. These results are shown in Figure \ref{RTMAX}(a)-\ref{RTMAX}(f). For RP-Growth previously studied, it is a mining rare itemsets algorithm based on the RP-tree structure.  This is especially true when \textit{minSup} is set to be very small, and the runtime increases sharply. For example, in Figure \ref{RTMIN} and Figure \ref{RTMAX}, the running time of the two compared algorithms always increases with an increase in \textit{maxSup}, and the running time of the two compared algorithms always increases with a decrease in \textit{minSup}. This result corresponds to the following experiments.

\textbf{Runtime vs. size/density of dataset}. After the runtime analysis on the benchmark datasets, we found that the size of the tested dataset also had a significant impact on the running time of the experiments. For a larger dataset, we need to spend more time on the mining process. In addition, we found that, for a dense dataset, although the dataset is not large, it takes a longer time in the mining process.

\textbf{Discussion}. Based on the results of the datasets in the experimental run, we found that the running time of the FRI-Miner algorithm was longer than that of the RP-Growth algorithm. The reason is that in the process of mining rare itemsets, we need to fuzzify the quantitative values; thus, the running time is relatively longer. Another reason for the longer running time is that FRI-Miner considers that the combination of frequent itemsets may also be rare itemsets when mining rare itemsets. In other words, the branches of frequent itemsets are not removed in the pruning stage, which makes we find more rare itemsets that are ignored. Owing to the existence of pruning strategies,  FRI-Miner not only reduces the search space, but also improves the efficiency of the mining task.

\subsection{Efficiency w.r.t. pattern}

\begin{figure*}[!htbp]	
	\centering	
	\includegraphics[trim=90 0 0 0,clip,scale=0.65]{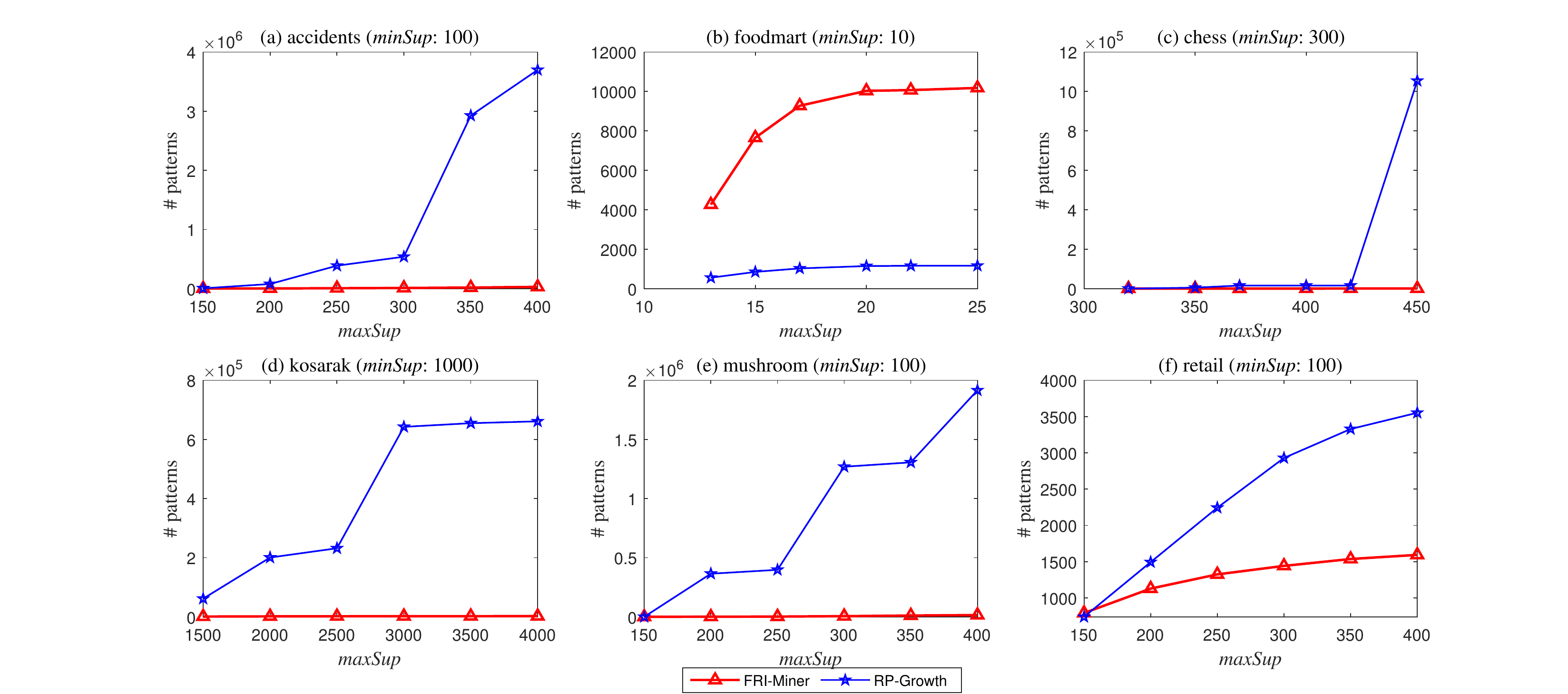}
	\caption{Pattern vs. \textit{maxSup}}
	\label{PMAX}
\end{figure*}	

\begin{figure*}[!htbp]		
	\hspace{0in}
	\centering
	\includegraphics[trim=90 0 0 0,clip,scale=0.63]{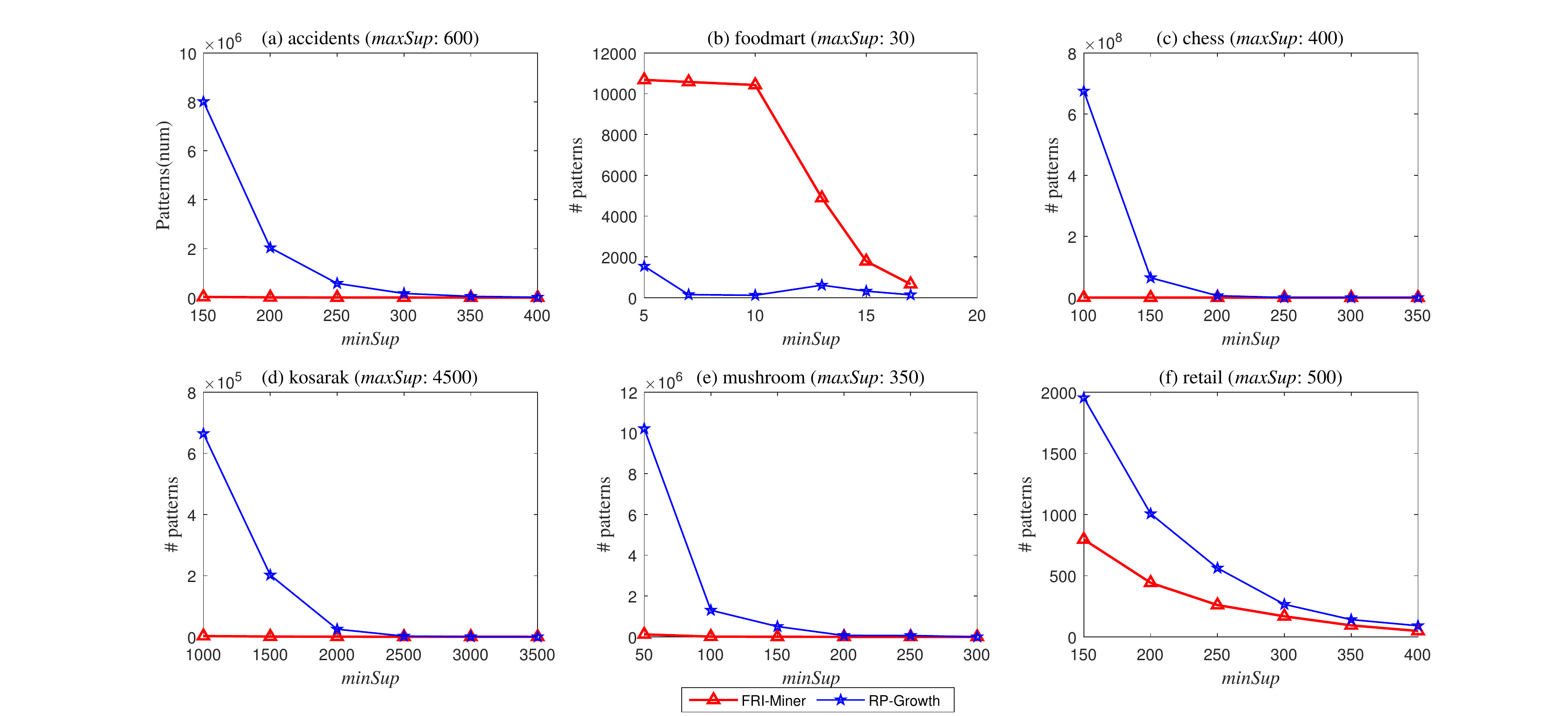}
	\caption{Pattern vs. \textit{minSup}}
	\label{PMIN}
\end{figure*}

\begin{figure*}[!htbp]
	\centering	
	\includegraphics[scale=0.4]{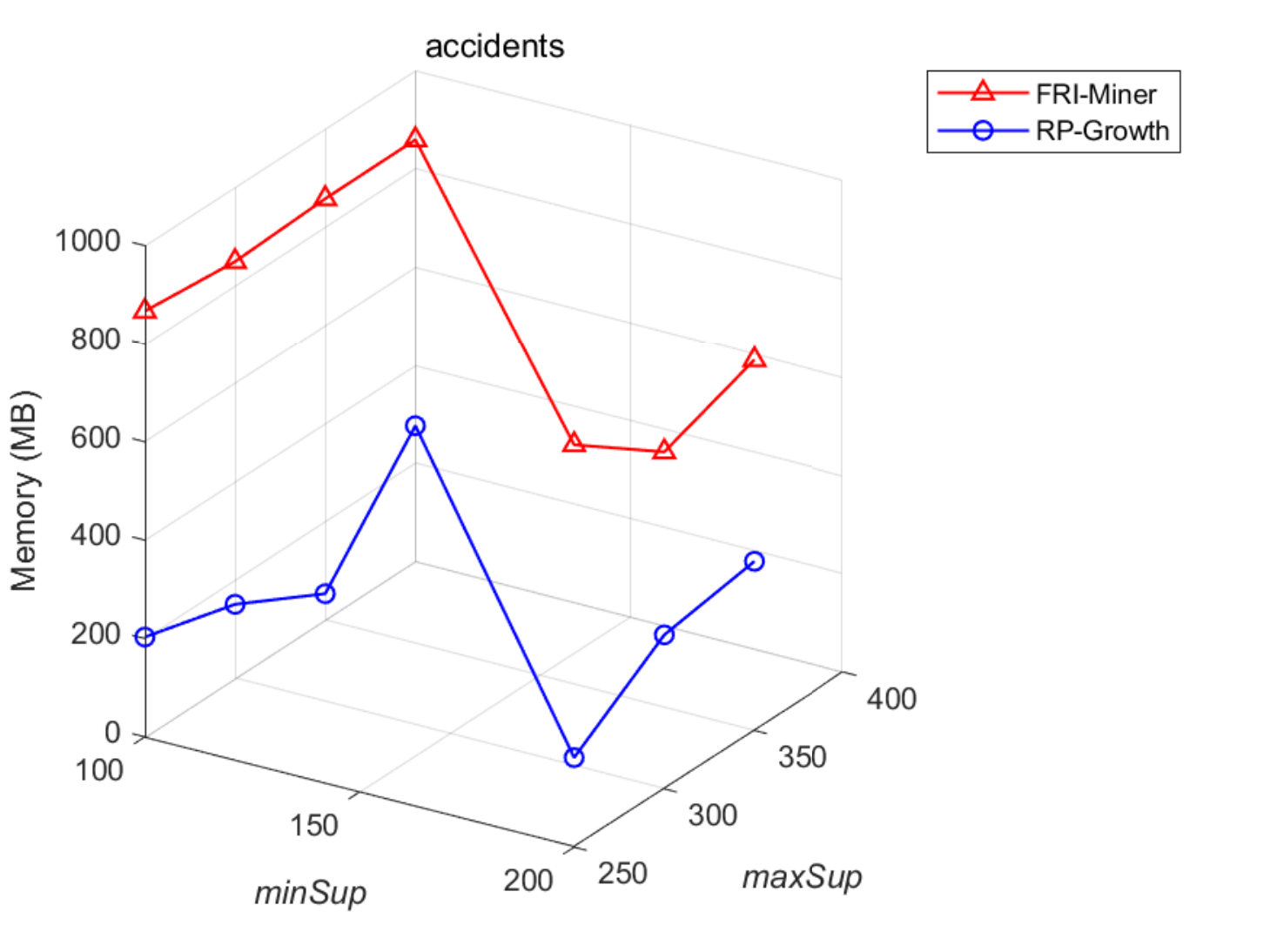}	
	\includegraphics[scale=0.4]{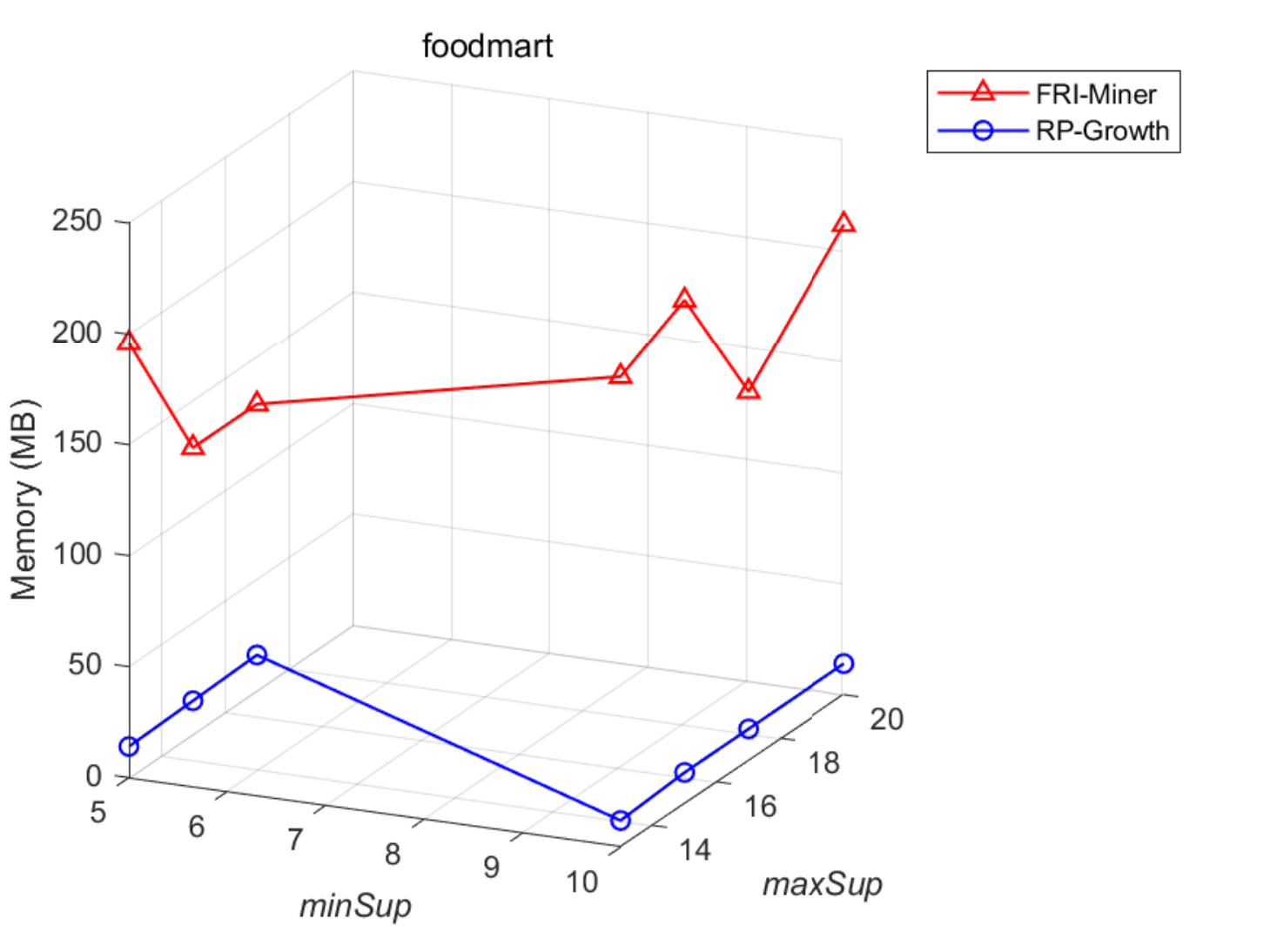}	 %BMSM
	\includegraphics[scale=0.4]{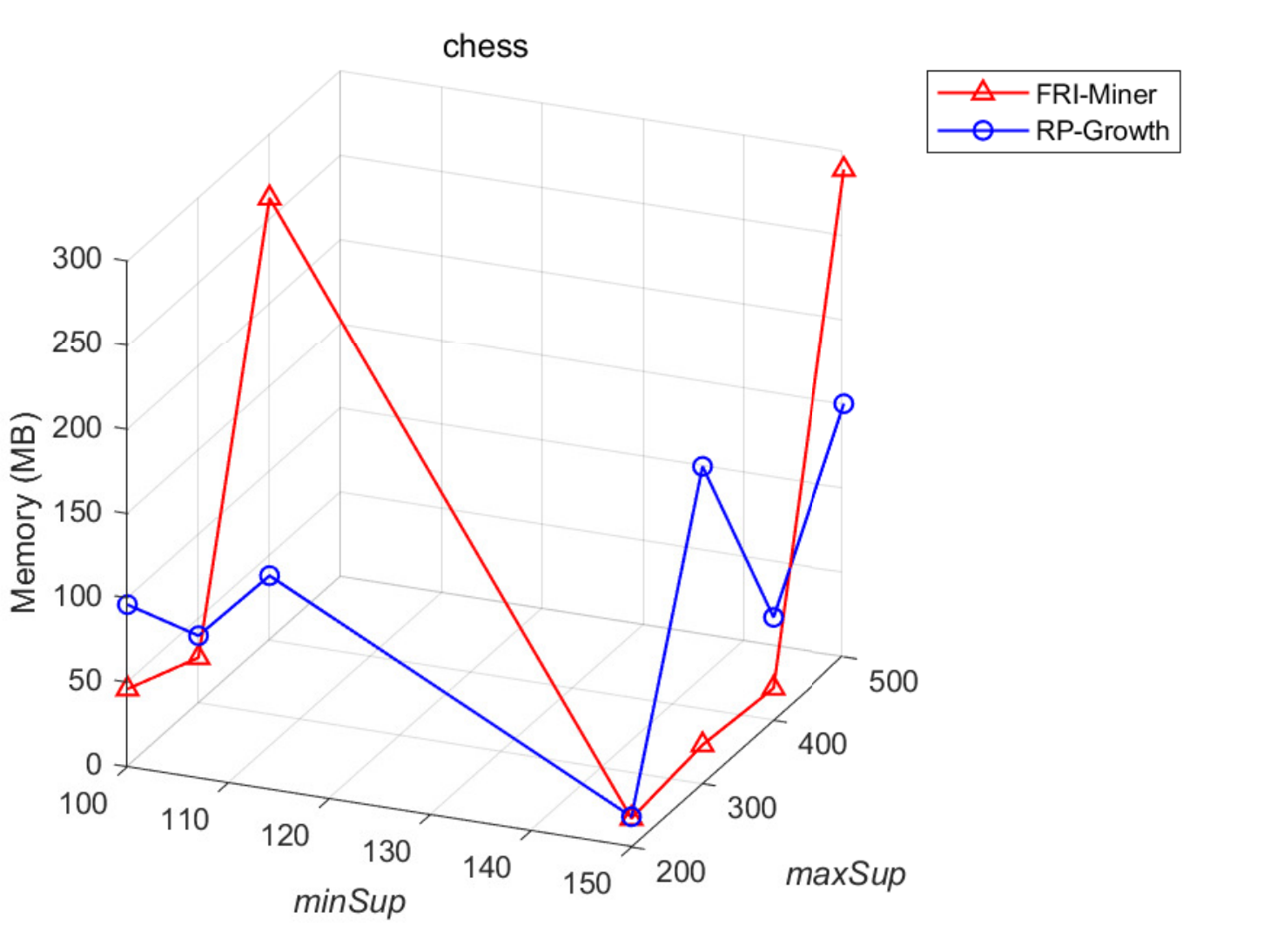}
	
	\includegraphics[scale=0.4]{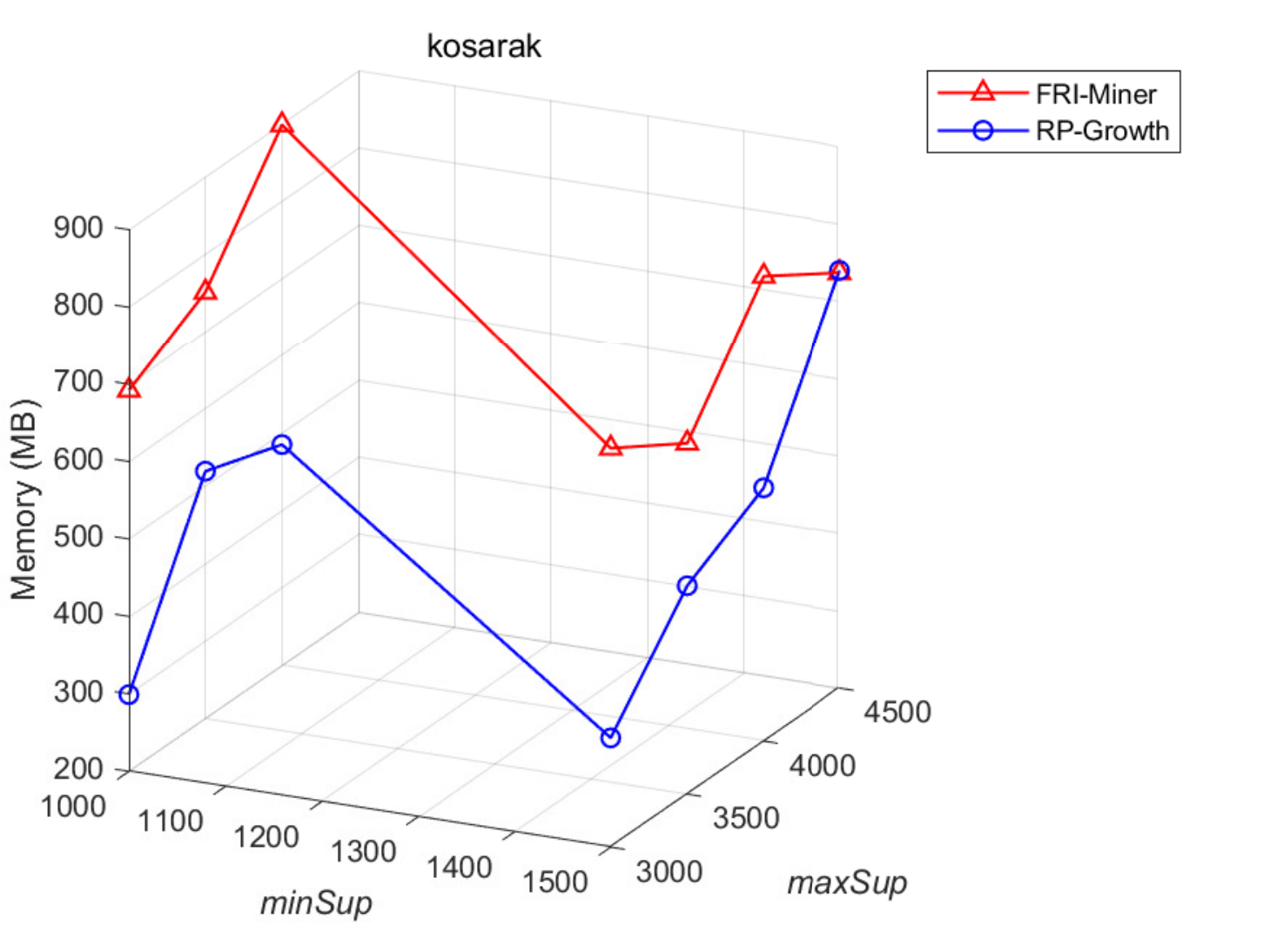}
	\includegraphics[scale=0.4]{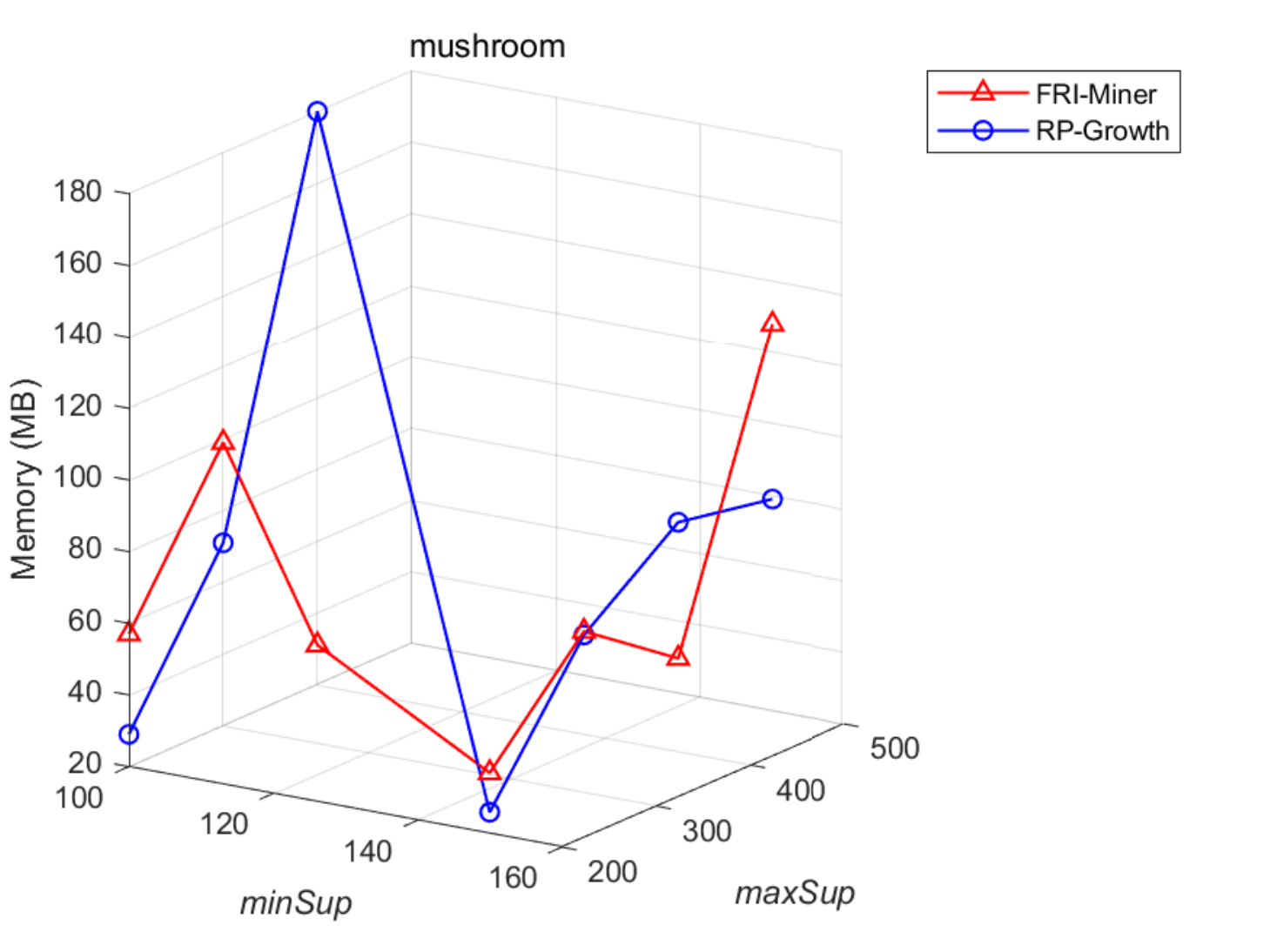}
	\includegraphics[scale=0.4]{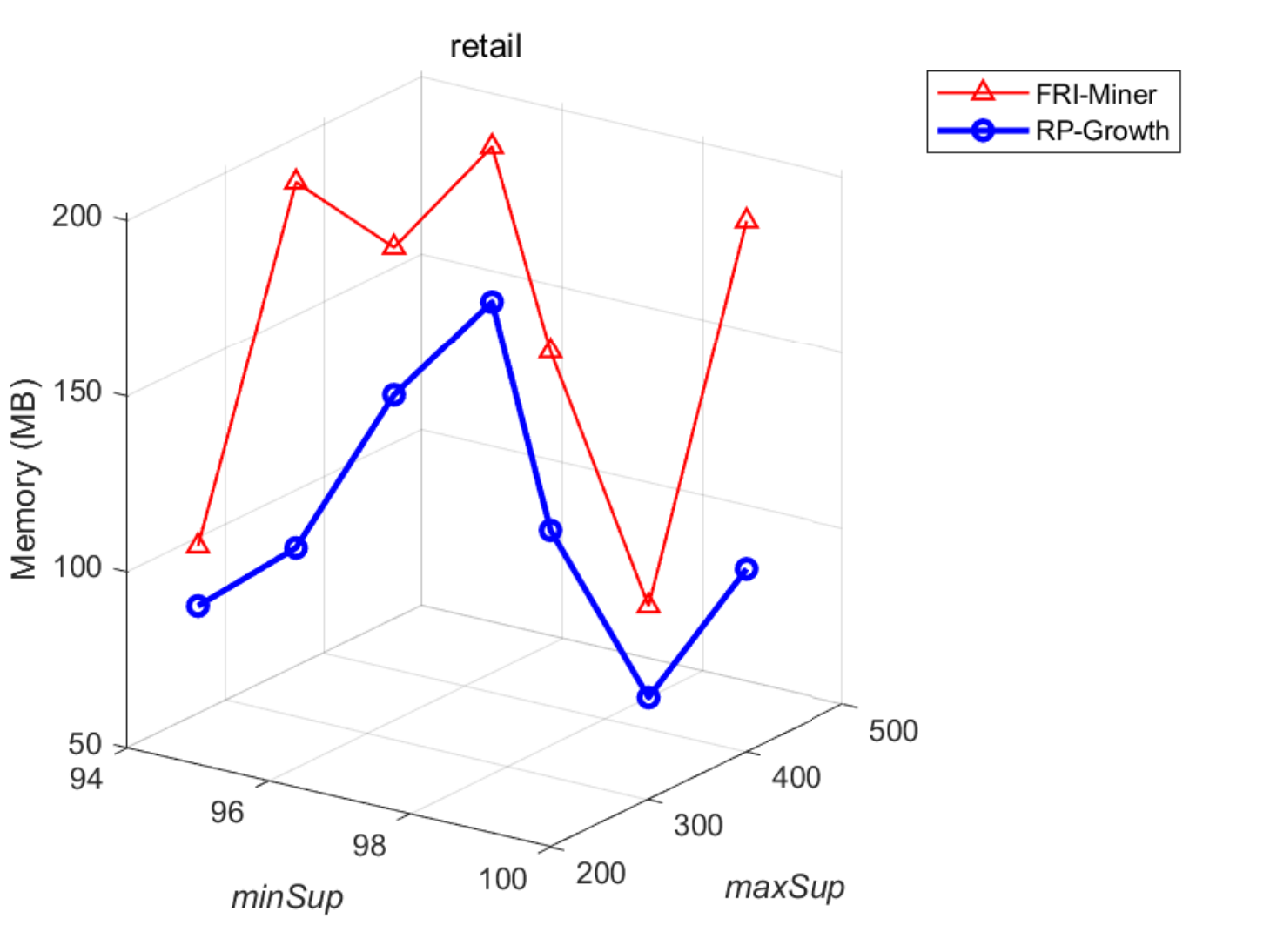}
	\caption{Memory vs. \textit{minSup}/\textit{maxSup}}
	\label{databases Memory}
\end{figure*}

\textbf{Pattern}. In this subsection, we compare the number of patterns discovered by the algorithms. Our intuition suggests that in the FRI-Miner algorithm, the number of patterns mined is greater than that of the RP-Growth algorithm. This has also been confirmed in some datasets in the experiment, such as dataset accidents and BMS, which well reflect this phenomenon. The results of the experiment may be unclear or conflicting . The reason for this experimental phenomenon is that we use quantitative values and thus use fuzzification and pruning strategies to prune many inconsistent itemsets in the mining process. This pruning strategy ensures that the number of itemsets we finally discover is relatively small, but the practicability of the itemsets that we discover is higher. In the experiments, we found that most datasets reflect this phenomenon, including retail, chess, mushroom, and kosarak. The specific experimental results are listed in Table \ref{table:pattern}.

\textbf{Pattern vs. support}. For each dataset, we choose a threshold that meets their own thresholds; when the minimum support threshold selected continues to increase and the maximum support threshold remains unchanged, we find that the number of itemsets extracted will continue to decrease. Figure \ref{PMIN} shows the experimental results. Similarly, when the minimum threshold remains unchanged and the maximum threshold value continuously increases, we find that the number of itemsets mined continuously increases. The experimental results are presented in Figure \ref{PMAX}. This is especially true when \textit{minSup} is set to be very small, and the runtime increases sharply. The effects of the \textit{minSup} and \textit{maxSup} thresholds were evaluated first. For example, in Figure \ref{PMAX} and Figure \ref{PMIN}, the number of patterns of the two compared algorithms always increases with an increase in \textit{maxSup}, and the number of patterns of the two compared algorithms always increases with a decrease in \textit{minSup}.

\subsection{Efficiency w.r.t. memory}

In this subsection, we further analyze the maximum memory occupied during the mining processes of the algorithms. According to the experimental results of six datasets, we found that in sparse datasets (e.g., retail, kosarak), the FRI-Miner algorithm can extract meaningful rare itemsets, that is, ignore most of the itemsets that do not meet the actual situation, but also the memory usage is lower than that of RP growth. However, for relatively dense datasets, a relatively large memory space is required during the mining process. In the dataset BMS, FRI-Miner shows good performance, not only mining more eligible itemsets but also taking up relatively less memory space. The experimental results of these datasets are shown in Figure \ref{databases Memory}(a) to Figure \ref{databases Memory}(f).
\section{Conclusion and Future Work} %  
\label{sec:conclusion}

In this paper, we propose a novel fuzzy-theoretic-based algorithm for mining FRIs determined by fuzzy set theory and pattern mining. The purpose of this effective mining algorithm is to find meaningful rare itemsets that meet the minimum thresholds. In contrast to existing algorithms, an effective FRI-Miner algorithm is proposed based on a fuzzy list structure.  The algorithm requires specifying two minimum support thresholds; it utilizes several pruning strategies to prune unqualified itemsets and discovers the complete set of rare itemsets containing rare and frequent items. The experimental analysis shows that this algorithm performs well and has an improved overall mining quality  compared to that of the existing algorithm.

There are still some limitations in our research, such as the membership function and the specified minimum thresholds, which are defined in advance. If the threshold is specified to be extremely small, many candidate itemsets are generated, which occupy a large storage space. If it is specified as extremely high, some meaningful itemsets are ignored, which results in poor mining quality. It is hoped that there will be a more convenient algorithm in the future, which can automatically detect the most suitable threshold value and the setting of the membership function to effectively discover rare patterns.

\section*{Acknowledgment}

Thanks for the anonymous reviewers for their insightful comments, which  improved the quality of this paper. This work was partially supported by the National Natural Science Foundation of China (Grant No. 62002136 and Grant No. 11974373), and Guangzhou Basic and Applied Basic Research Foundation. % (Grant No. XXX).

\section*{References}
\bibliography{paper} 

\end{document}